\documentclass[11pt]{article}
\usepackage{enumitem}
\usepackage{graphicx}
\usepackage{fancybox}
\usepackage{comment}
\usepackage{xcolor}
\usepackage{nameref}
\usepackage{bbm}

\definecolor{ForestGreen}{rgb}{0.1333,0.5451,0.1333}
\definecolor{DarkRed}{rgb}{0.8,0,0}
\definecolor{Red}{rgb}{1,0,0}

\usepackage{amsmath}

\usepackage{amssymb}
\usepackage{amsthm}

\usepackage{subfig}

\usepackage{thm-restate}

\usepackage{float}

\def\setof#1{\left\{#1  \right\}}

\newcommand\R{\mathbb{R}}

\newcommand{\E}[1]{\mathop{{}\mathbb{E}}\left[#1\right]}

\newcommand{\polylog}[1]{\mathrm{polylog}(#1)}

\newcommand{\dist}{\operatorname{dist}}

\renewcommand{\hat}{\widehat}
\renewcommand{\tilde}{\widetilde}

\renewcommand{\epsilon}{\ensuremath\varepsilon}

\renewcommand{\phi}{\ensuremath{\varphi}}

\newtheorem{theorem}{Theorem}[section]

\newtheorem{lemma}[theorem]{Lemma}

\newtheorem{claim}[theorem]{Claim}

\newtheorem{definition}[theorem]{Definition}
\newtheorem{remark}[theorem]{Remark}

\newtheorem*{theorem*}{Theorem}

\newtheorem*{corollary*}{Corollary}
\newtheorem*{conjecture*}{Conjecture}
\newtheorem*{lemma*}{Lemma}
\newtheorem*{thm*}{Theorem}
\newtheorem*{prop*}{Proposition}
\newtheorem*{obs*}{Observation}
\newtheorem*{definition*}{Definition}

\newtheorem*{remark*}{Remark}
\newtheorem*{rec*}{Recommendation}

\usepackage{fullpage}
\usepackage{algorithm}
\usepackage{algorithmic}
\usepackage{amsmath}
\usepackage{graphicx}
\usepackage{float}
\usepackage[colorinlistoftodos]{todonotes}

\usepackage{hyperref}
\usepackage{cleveref}
\providecommand{\Description}[1]{}
\crefname{figure}{figure}{figures}
\Crefname{figure}{Figure}{Figures}

\author{
    Yves Baumann \\
      \small ETH Zurich \\
  \small \texttt{ybaumann@ethz.ch}
  \and
  Gernot Z\"ocklein\thanks{The research leading to these results has received funding from the starting grant “A New Paradigm for Flow and Cut Algorithms” (no. TMSGI2 218022) and grant no. 200021 204787 of the Swiss National Science Foundation.} \\
  \small ETH Zurich \\
  \small \texttt{gernot.zoecklein@inf.ethz.ch}
}

\date{}
\begin{document}
\title{Parallel Spectral Graph Sparsification via Low Diameter Decompositions}

\maketitle
\begin{abstract}
    We present a new solver-free parallel spectral sparsification algorithm for weighted graphs that relies only on parallel low-diameter decompositions and independent sampling. This yields the first algorithmic improvement over prior, solver-free parallel sparsification approaches since Koutis (2014) and, for the first time for a practical algorithm, eliminates any dependence on the target approximation accuracy $\epsilon$ in the algorithm’s work and depth.
    
    Our algorithm works by sub-sampling edges according to their robust connectivity, as introduced by Kapralov and Panigrahy (2012). We show how to estimate the robust connectivities of $G$ in an extremely simple manner: we create multiple random sub graphs $G_p$, where each edge in $G$ is sub-sampled independently with probability $p_e = \min \setof{w_e \cdot p, 1}$. Then, we run a Low Diameter Decomposition in each of the graphs. If $u$ and $v$ often share a cluster in the LDDs, then this provides us with an upper bound on the robust connectivity of the edge $e = (u,v)$. Carefully invoking this procedure for $O(\log n)$ different values of the probabilities $p$ then allows us to obtain sufficiently good estimates for sub-sampling.
    
    We additionally complement the theory with an experimental evaluation demonstrating strong performance across relevant graphs and sparsity regimes. 
\end{abstract}

\section{Introduction}
Among the most important computational kernels in scientific computing is solving a linear system
\begin{equation*}
    Ax = b.
\end{equation*}
In many applications, the matrix $A$ is large, sparse, and symmetric positive semidefinite, so solving this system via a direct factorization is often impractical. Instead, practitioners resort to iterative methods such as Conjugate Gradient (CG), which repeatedly refines an initial guess and only requires matrix--vector product access to $A$.

Iterative methods also allow for \emph{preconditioning}. If $P$ is full rank (or shares the same kernel as $A$), then the systems
\begin{equation*}
    P^{-1}Ax = P^{-1}b
    \qquad\text{and}\qquad
    Ax = b
\end{equation*}
have the same solution. If the preconditioned system converges faster, it is advantageous to solve it instead. For CG, convergence to error $\delta$ is bounded by the condition number $\kappa$ of the preconditioned system \cite{saad2003}. For an initial guess $x_0$ and the exact solution $x_*$, we have the bound:
\begin{equation*}
    \|x_* - x_k\|_{A}
\;\le\;
2\left(\frac{\sqrt{\kappa}-1}{\sqrt{\kappa}+1}\right)^k
\|x_* - x_0\|_{A},
\qquad
\textnormal{where } \|v\|_{A} := \sqrt{v^\top A v},
\end{equation*}
\begin{equation*}
\kappa := \frac{\lambda_{\max}(A)}{\lambda_{\min}(A)}.
\end{equation*}
This dependence on $\kappa$ motivates the design of preconditioners that both (i) reduce the condition number substantially and (ii) can be applied efficiently in each CG iteration.

There has been extensive work on constructing good preconditioners. Incomplete Cholesky preconditioners are effective for some general sparse SPSD matrices, while multigrid preconditioners are highly successful for structured systems arising from numerical PDE discretizations. However, for graph-structured problems and other irregular sparsity patterns, designing parallelizable preconditioners with provable guarantees remains challenging.

In their seminal paper, Spielman and Teng \cite{spielman2008nearlylineartimealgorithmsgraph} showed that a linear system with a diagonally dominant matrix can be solved in nearly linear time in the number of nonzeros, via sparse \emph{spectral} approximations that lead to fast convergence of iterative methods.

Diagonally dominant matrices can be interpreted as graph Laplacians. Let $G=(V,E,w)$ be a weighted graph with Laplacian $L$. We say that a re-weighted sub graph $\tilde G$ of $G$, with Laplacian $\tilde L$, is a $(1 \pm \epsilon)$-spectral sparsifier  of $G$ if
\begin{equation*}
    (1- \epsilon)\,x^\top Lx \le x^\top \tilde Lx \le (1+\epsilon)\,x^\top Lx \quad   \forall x\in\mathbb{R}^V.
\end{equation*}
While graphs can be very dense ($|E| \approx |V|^2$), it is known \cite{twiceramanujan} that any graph admits a $(1 \pm \epsilon)$-spectral sparsifier with at most $\mathcal{O}(|V|/\epsilon^2)$ edges. Thus, one can replace a dense graph by a much sparser one while approximately preserving its spectral structure, making downstream linear-algebraic primitives cheaper.

A key application is solving Laplacian linear systems $Lx=b$. Such systems arise not only in scientific computing, but also as subroutines in graph optimization (e.g., max-flow and min-cost flow), in machine learning, and increasingly in modern graph-based pipelines including graph neural networks. Now, note that if $\Tilde{L}$ is a spectral sparsifier of $L$, this in particular implies that the condition number of $\Tilde{L}^{-1} L$ is $1 + O(\epsilon)$, so by using CG, we can solve linear systems in $L$ by solving roughly $O(\sqrt{\epsilon})$ systems in $\Tilde{L}^{-1}$. As approximate Gaussian elimination yields a factorization with $\mathcal{O}(|E|\,\polylog{|V|})$ nonzeros that can be used to solve Laplacian systems \cite{kyng2016approximategaussianeliminationlaplacians}, we profit immensely if the graph $\Tilde{G}$ is much sparser than $G$. Furthermore, repeated sparsification during factorization can yield polylogarithmic depth parallel constructions \cite{BK24}, making fast parallel sparsification a useful primitive both on its own and as a component inside parallel Laplacian solvers. For these reasons, we believe it to be an important problem to understand highly efficient and practical parallel constructions of sparse spectral sparsifiers. 

\paragraph{Our Contribution}
We revisit spectral sparsification through the lens of robust connectivity, as introduced by \cite{Kapralov2014}. Their approach incurs substantial polylogarithmic overhead in both depth and work due to the reliance on approximate distance oracles. We demonstrate that these overheads can be eliminated entirely when estimating robust connectivities.

Our first insight is that it is unnecessary to compute the exact distances $\dist(u, v)$ in the relevant auxiliary graphs. Instead, it suffices to detect the cases $\dist(u, v) = O(\log n)$ and $\dist(u, v) = \Omega(\log^2 n)$. This is a significantly simpler task than approximating arbitrary distances.

Our second insight is that, because robust connectivity is estimated by counting the occurrence of an event on random sub graphs, correctness in expectation is sufficient. Concretely, it is enough to sample random variables $X_{u,v}$ such that $\Pr[X_{u,v} = 1] \ge 1/2$ whenever $\dist(u, v) \le O(\log n)$, and $X_{u,v} = 0$ deterministically whenever $\dist(u, v) \ge \Omega(\log^2 n)$. Notice that this avoids an extra $O(\log n)$ factor in work that would be required for a high probability statement.

We show that such random variables can be obtained via a single low-diameter decomposition (LDD): we set $X_{u,v} = 1$ if and only if $u$ and $v$ lie in the same cluster. Importantly, low-diameter decompositions admit highly efficient parallel implementations, both in theory and in practice, yielding substantial improvements over prior work.

Then, we show how we can use the estimated robust connectivities to construct practical, sparse preconditioners for Laplacian linear systems. We show that our proposed sparsification algorithm outperforms baselines by a factor of $4\times$ to $5\times$ on the largest graphs, measured by the number of PCG iterations used until convergence. And that it is robust with respect to the desired sparsity of the preconditioner.

Our parallel algorithm, to the best of the authors' knowledge, is the first solver-free, practical spectral sparsification algorithm with no dependency on the desired accuracy $\epsilon$ in the depth and work. As such, it is highly relevant as a subroutine in parallel approximate Cholesky algorithms, which need to repeatedly compute sparsifiers with accuracy $\frac{\epsilon}{\log n}$. We outline how using our algorithm as a black box in the known parallelization framework from \cite{BK24} leads to better work and depth guarantees to solve Laplacian linear systems.

\paragraph{Related Work}
Our algorithm builds heavily on the idea of robust connectivities, as introduced in \cite{Kapralov2014}. Unfortunately, as the authors note, their algorithm requires access to an approximate distance oracle, and at the time of publication, no satisfactory parallel distance oracle was available. Very recently, in \cite{kyng2025random}, a $O(\log n)$ approximate distance oracle with $O(m\,\polylog{n})$ work and $O(\polylog{n})$ depth was discovered. Using this distance oracle inside \cite{Kapralov2014} results in a combinatorial algorithm that constructs a spectral sparsifier of $G$ with $O(n \epsilon^{-2} \log^3 n)$ edges. This algorithm requires $O(m \,\polylog{n})$ work and $O(\polylog{n})$ depth. Unfortunately, the polylog factors in \cite{kyng2025random} are quite large, mostly due to repeated calls to parallel approximate SSSP algorithms, and we are not aware of a practical parallel implementation.

In \cite{koutis2014}, Koutis introduced an alternative combinatorial parallel algorithm based on graph spanners, and the analysis of Koutis' algorithm was later refined in \cite{soarasmus}. Their algorithm produces sparsifiers with $O(n \epsilon^{-2}\log^2 n \log \log n)$ many edges, its depth is \\$O(\epsilon^{-2} \log^4 n \log^* n)$, and its work $O(m \epsilon^{-2} \log^2 n)$. Note here the factor of $\epsilon^{-2}$ in both its depth and work. For values $\epsilon = O(1 / \log n)$, this factor causes a significant overhead. Note also that our algorithm improves over the depth of the algorithm by a factor of $O(\log^2 n)$, although our sparsity is larger by a factor of $O(\log n)$.

In \cite{doi:koutis2015}, it was shown how to find $2$-approximations to the effective resistances of a graph $G$ by essentially solving $O(\log n)$ Laplacian linear systems. Consequently, one can use the parallel spectral sparsification algorithm from \cite{10.1145/3558481.3591101} to get a spectral sparsifier of $G$ with $O(n \epsilon^{-2} \log n)$ many edges in $O(\log^2 n \log \log n)$ depth and $O(m \log^4 n \log \log n)$ or $O((m+n \log^5 n)\log^2 n \log \log n)$ work. Consequently, for graphs with more than $\Omega(n\log^5 n)$ edges, this result matches our result both in work and depth, although it yields a graph that is sparser by a factor of $O(\log^2 n)$. We remark that one of the main practical applications of spectral sparsification is inside Laplacian Solvers, and so the practical relevance of a solver-based sparsification algorithm is unclear. We also believe that even from a theoretical viewpoint, combinatorial and solver-free spectral-sparsification algorithms are insightful and interesting in their own right.

%\gernot{yves, can you write something here about the practical part}

\section{Preliminaries}
\paragraph{Notation}
In this paper, we consider weighted undirected graphs  $G = (V, E, w)$. Throughout, we denote by $n$ the number of vertices of $G$, $m$ the number of edges, and $W = \max_{e \in E} w_e$ is the maximum weight of $G$. We denote by $L_G$ the graph Laplacian of $G$. We denote by $\dist_G(u, v)$ the shortest path distance between $u$ and $v$ in graph $G$. We call a set $\mathcal{X}$ of subsets of $V$ a partition of $G$ if $\bigcup_{X \in \mathcal{X}} X = V$ and any two distinct sets $X, X' \in \mathcal{X}$ are disjoint: $X \cap X' = \emptyset$. Given a set $X \subseteq V$, we denote by $G[X]$ the induced subgraph on $X$. We write that $H \preceq G$ if for every $x \in \R^n$, $x^T L_H x \leq x^T L_G x$. We denote by $R_e$ the effective resistance of edge $e = (u, v)$ in $G$, i.e., $R_e = (\mathbf{1}_u - \mathbf{1}_v)^T L_G^{+} (\mathbf{1}_u - \mathbf{1}_v)$. We call the quantity $w_e R_e$ the leverage score of $e$.

For a sampling parameter $p \in [0,1]$ we denote by $G_p$ the \emph{unweighted} random graph obtained by sampling each edge $e \in E$ independently with probability $\min\{w_e p, 1\}$.

\paragraph{Parallel Model}
In this paper, we assume the CRCW PRAM model.

\subsection{Effective Resistance Sampling}
Our algorithm works by independent sampling of edges according to their \emph{effective resistances}, as first introduced in \cite{doi:er_sampling}.

\begin{theorem}[\cite{doi:er_sampling}]\label{thm:er_sampling}
 Let $H$ be obtained by sampling edges of $G$ independently with probability $p_e \geq c w_e R_e \log n / \epsilon^2$ for some $\epsilon>1 / \sqrt{n}$ and a sufficiently large constant $c>0$, and {rescaling the weight of each sampled edge by $1 / p_{\mathrm{e}}$}. Then with high probability,
$$
(1-\epsilon) G \preceq H \preceq(1+\epsilon) G .
$$
\end{theorem}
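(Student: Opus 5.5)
The plan is the standard matrix-concentration argument: write $H$ as a sum of independent random PSD matrices in a normalized basis and apply a matrix Chernoff bound. For each edge $e=(u,v)$ let $b_e = \mathbf{1}_u - \mathbf{1}_v$, so $L_G = \sum_e w_e b_e b_e^T$. We may assume $G$ is connected; otherwise we argue on each component separately. Work on the image of $L_G$ (orthogonal to $\mathbf{1}$) and set $\Pi = L_G^{+/2} L_G L_G^{+/2}$, the projection onto that image. For each edge put $y_e = \sqrt{w_e}\, L_G^{+/2} b_e$. Then
\begin{equation*}
\sum_e y_e y_e^T = \Pi \qquad\text{and}\qquad \|y_e\|^2 = w_e R_e .
\end{equation*}
The sparsifier is $L_H = \sum_e \xi_e \frac{w_e}{p_e} b_e b_e^T$, where $\xi_e \sim \mathrm{Bernoulli}(p_e)$ are independent. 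If we define $X_e = \frac{\xi_e}{p_e} y_e y_e^T$, then $L_G^{+/2} L_H L_G^{+/2} = \sum_e X_e$, and its expectation is $\Pi$. Both Laplacians have kernel $\mathrm{span}(\mathbf{1})$, or the kernel of $L_H$ is at least that large. So the claim $(1-\epsilon)G \preceq H \preceq (1+\epsilon)G$ is equivalent to
\begin{equation*}
\Bigl\|\sum_e X_e - \Pi\Bigr\| \le \epsilon
\end{equation*}
in operator norm, restricted to the image of $\Pi$.

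Edges with $p_e = 1$ are deterministic and contribute exactly their expectation, so they can be dropped from the random sum. Every remaining summand satisfies
\begin{equation*}
0 \preceq X_e \preceq \frac{w_e R_e}{p_e}\, \Pi \preceq \frac{\epsilon^2}{c \log n}\, \Pi,
\end{equation*}
using the hypothesis $p_e \ge c\, w_e R_e \log n/\epsilon^2$. We now apply the matrix Chernoff bound (Tropp, or Ahlswede--Winter / Rudelson) in dimension $n-1$, with uniform bound $R = \epsilon^2/(c\log n)$ on each summand and mean $\mu_{\min} = \mu_{\max} = 1$. It gives
\begin{equation*}
\Pr\bigl[\lambda_{\max} \ge 1+\epsilon\bigr] \le n \exp\!\left(-\frac{\epsilon^2}{3R}\right) = n \cdot n^{-c/3},
\end{equation*}
and the same bound for $\lambda_{\min} \le 1-\epsilon$. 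For $\epsilon \le 1$ and $c$ large enough, a union bound makes the total failure probability at most $n^{-\Omega(c)}$.

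The main obstacle is only bookkeeping: handling the kernel correctly (via the pseudoinverse square root and the projection $\Pi$) and making sure the variables $X_e$ are bounded in the precise form the matrix Chernoff theorem requires. The condition $\epsilon > 1/\sqrt{n}$ is not essential to this argument. I read it as ensuring that the edge budget is nontrivial, and I would simply note that the proof goes through for any $\epsilon \in (0,1]$.
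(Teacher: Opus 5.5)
The paper gives no proof of this statement; it is quoted from \cite{doi:er_sampling}. Your argument is the standard matrix-Chernoff proof of the independent-sampling version, and it is correct in substance, though it is not the route of the cited source.

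\textbf{Comparison with the cited source.} Spielman and Srivastava draw $q=O(n\log n/\epsilon^2)$ edges \emph{with replacement} from the distribution proportional to $w_eR_e$. They work with the $m\times m$ projection $W^{1/2}BL_G^{+}B^TW^{1/2}$ in edge space and use the Rudelson--Vershynin bound on $\mathbb{E}\bigl\|\frac1q\sum_i y_iy_i^T-\mathbb{E}\,yy^T\bigr\|$. That bound carries a factor $\sqrt{\log q/q}$, and keeping $\log q=O(\log n)$ is exactly where the hypothesis $\epsilon>1/\sqrt n$ enters. Because it controls only the expected deviation, it also gives only constant success probability.

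Your route (independent Bernoulli sampling, normalization by $L_G^{+/2}$ in vertex space, Tropp's tail bounds) directly proves the form the paper actually uses: arbitrary oversampling $p_e$ and high probability. You are right that $\epsilon>1/\sqrt n$ then plays no role. Your restriction $\epsilon\le 1$ is legitimate and in fact necessary; the source assumes it too. To see why, take a tree and $\epsilon^2=c\sqrt n\log n$, so every $p_e$ may equal $n^{-1/2}$. With overwhelming probability some edge is sampled, and its cut vector $x$ then has $x^TL_Hx=\sqrt n\,x^TL_Gx>(1+\epsilon)\,x^TL_Gx$.

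\textbf{One step needs repair.} After you drop the edges with $p_e=1$, the remaining random sum has mean $\Pi-D$ with $D=\sum_{p_e=1}y_ey_e^T$, not $\Pi$. So $\mu_{\min}=\mu_{\max}=1$ is false for it. For example, a bridge has $w_eR_e=1$, which forces $p_e=1$ and makes $\Pi-D$ singular on $\mathrm{im}\,\Pi$. A multiplicative Chernoff bound for the random part also cannot simply be combined with $D$ afterwards.

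There are two easy fixes:
\begin{itemize}
\item Do not drop those edges. Matrix Chernoff allows deterministic summands, so split each such $y_ey_e^T$ into $\lceil w_eR_e/R\rceil$ equal pieces, each $\preceq R\,\Pi$. The full sum then has mean exactly $\Pi$ on $\mathrm{im}\,\Pi$, and your computation goes through verbatim.
\item Alternatively, apply matrix Bernstein to the centered terms over the edges with $p_e<1$. These satisfy $\|X_e-\mathbb{E}X_e\|\le R$ and $\sum_e\mathbb{E}(X_e-\mathbb{E}X_e)^2\preceq R\,\Pi$. This gives $\Pr\bigl[\|\sum_eX_e-\Pi\|\ge\epsilon\bigr]\le 2n\exp\bigl(-3\epsilon^2/(8R)\bigr)=2n\cdot n^{-3c/8}$ for $\epsilon\le 1$.
\end{itemize}
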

{We will also use the following basic variant of a Chernoff Bound:
\begin{theorem}\label{thm:chernoff}[See Exercise 4.7 in \cite{mitzenmacher2005probability}]
    Let $\setof{X_i}_{i=1}^k$ be a sequence of independent Bernoulli random variables. Let $X = \frac{1}{k} \sum_{i=1}^k X_i$, and $\mu := \E{X}$. Then for any $\delta \in [0, 1]$, we have that
    $$
    \Pr[X \leq (1- \delta) \mu] \leq e^{- \delta^2 k \mu / 2}.
    $$
    Moreover, for any $c \geq \mu$ and $\delta \in [0, 1]$, we additional have that
    $$
    \Pr[X \geq (1+\delta) c] \leq e^{- \delta^2 k c / 3}.
    $$
\end{theorem}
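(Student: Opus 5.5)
This is a standard multiplicative Chernoff bound for an average of $k$ independent Bernoulli variables. I would prove it via the moment generating function and Markov's inequality, and then reduce both tails to elementary scalar inequalities. Write $X_i \sim \mathrm{Bernoulli}(p_i)$, and note that $kX = \sum_i X_i$ has mean $k\mu = \sum_i p_i$. For any $t$, independence gives
\[
\E{e^{t kX}} = \prod_{i=1}^k \bigl(1 + p_i(e^t - 1)\bigr) \le \exp\bigl(k\mu(e^t - 1)\bigr),
\]
using $1 + y \le e^y$.

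\textbf{Lower tail.} Take $t = \ln(1-\delta) < 0$ for $\delta \in (0,1)$; the case $\delta = 0$ is trivial, and $\delta = 1$ follows by continuity or directly. Markov's inequality applied to $e^{tkX}$ gives
\[
\Pr[X \le (1-\delta)\mu] \le \left( \frac{e^{-\delta}}{(1-\delta)^{1-\delta}} \right)^{k\mu}.
\]
It then suffices to show the scalar inequality $-\delta - (1-\delta)\ln(1-\delta) \le -\delta^2/2$. This follows from the expansion $(1-\delta)\ln(1-\delta) \ge -\delta + \delta^2/2$, which can be checked by differentiating twice: both sides vanish at $0$ together with their first derivatives, and the second derivative of the left side is $1/(1-\delta) \ge 1$.

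\textbf{Upper tail with $c \ge \mu$.} This is the only step needing care, since the bound is stated in terms of $c$ rather than $\mu$. The plan is to dominate the variables. Increase the success probabilities $p_i$ to $p_i' \ge p_i$ so that $\frac1k \sum_i p_i' = c$. This is possible when $c \le 1$; if $c > 1$, the event $X \ge (1+\delta)c > 1$ is impossible, since $X \le 1$, and the bound holds trivially. The increased variables can be coupled to be monotonically larger than the originals, so $\Pr[X \ge (1+\delta)c]$ can only increase. Alternatively, I would simply run the MGF bound with $\mu$ replaced by $c$, since $e^t - 1 > 0$ for $t > 0$ means $\exp(k\mu(e^t-1)) \le \exp(kc(e^t-1))$, which avoids the coupling. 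Choosing $t = \ln(1+\delta)$ yields the bound $\bigl(e^\delta/(1+\delta)^{1+\delta}\bigr)^{kc}$. The remaining step is the scalar inequality $\delta - (1+\delta)\ln(1+\delta) \le -\delta^2/3$ on $[0,1]$. Set $f(\delta) = \delta - (1+\delta)\ln(1+\delta) + \delta^2/3$. Then $f(0) = 0$, $f'(\delta) = -\ln(1+\delta) + 2\delta/3$, and $f'' = -1/(1+\delta) + 2/3$. So $f'$ first decreases and then increases on $[0,1]$, with $f'(0) = 0$ and $f'(1) = 2/3 - \ln 2 < 0$, hence $f' \le 0$ and $f \le 0$. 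This calculus check is the main, though routine, obstacle; everything else is the textbook argument.
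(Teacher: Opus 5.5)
Your proposal is correct and matches the intended argument. The paper gives no proof of its own; it cites Exercise 4.7 of Mitzenmacher--Upfal, whose solution is exactly your MGF/Markov bound, with the upper tail extended to any $c \ge \mu$ via $\exp(k\mu(e^t-1)) \le \exp(kc(e^t-1))$ for $t \ge 0$, followed by the standard calculus simplifications you carry out. One cosmetic slip: in the lower-tail check the two sides do not vanish at $\delta = 0$ with their first derivatives; they agree there, and their first derivatives are both $-1$, which is all the comparison needs.
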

We need this slightly non-standard version because in the proof of \Cref{lma:separation} the only control we have over $\E{X}$ is that $\E{X} \leq 1/2$. But in the lemma, our goal is to show that with high probability $X < 8.5/16$. As $\E{X}$ may be much smaller than $1/2$, a standard-chernoff bound written only in terms of an exponential decay in terms of $\E{X}$ is thus not sufficient for our purposes.}

\subsection{Robust Connectivities}

To efficiently estimate the leverage scores of the edges of $G$, we make use of the notion of robust connectivities, introduced by Kapralov and Panigrahy \cite{Kapralov2014}.

\begin{definition}
Let $G = (V, E, w)$ be a weighted undirected graph. For $\eta \in [0,1]$, $\kappa \geq 1$ and {$e =(u, v) \in E$ we define 
\[
p_\kappa(e, \eta) := \Pr[\dist_{G_\eta}(u, v) > \kappa].
\]}
\end{definition}

We stress here again that while $G$ is a weighted graph, the graph $G_{\eta}$ is unweighted. We also remark that clearly $p_{\kappa}(e, \eta)$ is a decreasing function with increasing $\eta$, i.e., that $p_{\kappa}(e, \eta) \leq p_{\kappa}(e, \eta')$ for $\eta \geq \eta'$.

\begin{definition}[Robust Connectivity]\label{def:robust_connectivities}
For $e = (u, v) \in E$ we let the $\kappa$-robust connectivity $q_\kappa(e)$ 
denote the largest {$\eta \in [0, 1]$ such that $p_\kappa(e, \eta) = \Pr[\dist_{G_\eta}(u, v) > \kappa] \ge 1/2$.}
\end{definition}
{
The next lemma shows that sampling according to scaled versions of the robust connectivities constitutes a valid leverage-score based over-sampling.}

\begin{lemma}[Lemma $2$ in \cite{Kapralov2014}]\label{lem:robust_connectivities_bound}
For all edges $e \in E$,
\[
R_e \leq 2 \kappa \cdot q_\kappa(e).
\]
\end{lemma}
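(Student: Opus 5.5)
The plan is to bound $R_e$ with Thomson's principle: for any unit $u$--$v$ flow $f$ in $G$ we have $R_e \le \sum_{e'} f_{e'}^2 / w_{e'}$. So it suffices to build, for every $\eta' > q_\kappa(e)$, a unit flow of energy at most $2\kappa\eta'/\Pr[A_{\eta'}]\cdot\tfrac12\cdot 2$, i.e.\ essentially $2\kappa\eta'$. Then let $\eta' \downarrow q_\kappa(e)$. The flow is the \emph{average of short paths in random subgraphs}.

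\textbf{Step 1 (short paths are likely).} Fix $\eta' \in (q_\kappa(e), 1]$. By \Cref{def:robust_connectivities}, $\Pr[\dist_{G_{\eta'}}(u,v) > \kappa] < 1/2$. Hence the event $A := \{\dist_{G_{\eta'}}(u,v) \le \kappa\}$ has $\Pr[A] > 1/2$. On $A$, fix (by a deterministic tie-breaking rule) a shortest $u$--$v$ path $P$ in $G_{\eta'}$. It has at most $\kappa$ edges, all of which are edges of $G$. Define $f := \E{\chi_P \mid A}$, where $\chi_P$ is the unit flow along $P$. As an average of unit $u$--$v$ flows, $f$ is itself a unit $u$--$v$ flow in $G$.

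\textbf{Step 2 (energy bound).} For each edge $e'$ we have
\[
|f_{e'}| \le \Pr[e' \in P \mid A] \le \frac{\Pr[e' \in G_{\eta'}]}{\Pr[A]} \le 2\min\{w_{e'}\eta', 1\} \le 2 w_{e'} \eta'.
\]
Therefore $f_{e'}^2/w_{e'} \le 2\eta' |f_{e'}|$, and summing gives
\[
R_e \le \sum_{e'} \frac{f_{e'}^2}{w_{e'}} \le 2\eta' \sum_{e'} |f_{e'}| \le 2\eta'\, \E{|P| \mid A} \le 2\kappa \eta'.
\]
Letting $\eta' \to q_\kappa(e)$ from above yields $R_e \le 2\kappa\, q_\kappa(e)$. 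The key point is that the sampling probability $\min\{w_{e'}\eta',1\}$ is at most $w_{e'}\eta'$. This is exactly what converts the congestion bound into an energy bound proportional to $1/w_{e'}$, and it works uniformly for capped and uncapped edges.

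\textbf{Main obstacle: the boundary case $q_\kappa(e) = 1$.} Here no $\eta' > q_\kappa(e)$ exists in $[0,1]$, so Steps 1--2 do not apply. I expect to handle it separately. If $w_e \ge 1$ (e.g.\ under the natural normalization that all weights are at least $1$), then $e \in G_1$ deterministically. Hence $\dist_{G_1}(u,v) = 1 \le \kappa$, so $p_\kappa(e,1) = 0$ and $q_\kappa(e) < 1$, and this case cannot occur. Otherwise, I would use $R_e \le 1/w_e$ (the edge alone is a flow) and check whether the paper's normalization makes $1/w_e \le 2\kappa$. If not, the statement should be read with $q_\kappa$ defined via a supremum over $\eta$ beyond $1$ (allowing the cap $\min\{w_e\eta,1\}$ to matter), and then the same limiting argument goes through verbatim. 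The remaining steps are routine: measurability of the tie-broken path $P$, and the fact that averaging flows preserves the unit $u$--$v$ demand.
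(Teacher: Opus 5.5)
Your argument is correct for every edge with $q_\kappa(e)<1$. The paper gives no proof of its own: it imports the lemma from Kapralov--Panigrahy, and your argument is the standard proof of it. You apply Thomson's principle to the conditional average of short $u$--$v$ paths in $G_{\eta'}$. The bound $\Pr[e'\in G_{\eta'}]=\min\{w_{e'}\eta',1\}\le w_{e'}\eta'$ turns the congestion bound into an energy bound. You then let $\eta'\downarrow q_\kappa(e)$. That limit is genuinely needed, since at $\eta=q_\kappa(e)$ itself a short path exists only with probability at most $1/2$, and you handled it correctly. One cosmetic point: the expression in your opening sentence is garbled. What your computation actually gives is $R_e\le\kappa\eta'/\Pr[A_{\eta'}]<2\kappa\eta'$.

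The boundary case is not an obstacle you can overcome, because the statement as written is false there. Let $e=(u,v)$ be a bridge of weight $w<1/(2\kappa)$, for example the only edge of a two-vertex graph. Every $u$--$v$ path uses $e$, so $p_\kappa(e,\eta)\ge 1-w\eta\ge 1/2$ for all $\eta\in[0,1]$. Hence $q_\kappa(e)=1$, whereas $R_e=1/w>2\kappa$.

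So instead of leaving this as something to ``check'', make a normalization an explicit hypothesis. One option is $\min_e w_e\ge 1$, which the paper tacitly needs anyway, since its search over $\pi$ never exceeds $1$. Under it, your observation that $e\in G_1$ surely gives $q_\kappa(e)<1$ for every edge. The other option is to let $\eta$ range over $[0,\infty)$. Then $q_\kappa(e)<1/w_e$, and your limiting argument applies verbatim.
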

Kapralov and Panigrahy \cite{Kapralov2014} also proved that the weighted sum of robust connectivities is bounded from above:
\begin{lemma}[Lemma $3$ in \cite{Kapralov2014}]\label{lem:robust_connectivities_bound2}For a weighted undirected graph $G = (V, E, w)$, we have that
\[
\sum_{e \in E} w_e q_\kappa(e) \;\le\; 2n^{\,1 + O(1/\kappa)}.
\]
\end{lemma}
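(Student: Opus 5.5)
The plan is to couple all the random graphs $G_\eta$ through a single ``arrival time'' process, and to extract a subgraph of large girth whose expected size is at least $\frac12\sum_e w_e q_\kappa(e)$. The Moore bound then finishes the argument. Working scale by scale (grouping edges with $q_\kappa(e)\in[2^{-j},2^{-j+1})$) would also work, but it loses a $\log$ factor. The coupling avoids that loss and gives exactly the constant $2$.

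\emph{Step 1 (coupling).} For each edge $e$, draw $r_e\sim\mathrm{Unif}[0,1]$ independently. Put $e\in G_\eta$ iff $r_e\le w_e\eta$. For each fixed $\eta$ this has exactly the law of $G_\eta$, and the graphs are monotone: $G_\eta\subseteq G_{\eta'}$ for $\eta\le\eta'$. Edge $e$ ``arrives'' at time $\tau_e:=r_e/w_e$. Define
\[
F:=\{e=(u,v)\in E:\ \tau_e\le 1/w_e \text{ and } \dist_{G_{\tau_e}\setminus e}(u,v)>\kappa\},
\]
which is the edge set produced by the greedy spanner process run in arrival order.

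\emph{Step 2 (girth).} Suppose $F$ contains a cycle $C$ with at most $\kappa+1$ edges. Let $e\in C$ be its last-arriving edge (ties have probability $0$). Every other edge of $C$ is present in $G_{\tau_e}\setminus e$, and together they give a $u$--$v$ path of length at most $\kappa$, contradicting $e\in F$. Hence $F$ has girth greater than $\kappa+1$. The standard Moore bound then gives $|F|\le n^{1+O(1/\kappa)}$ deterministically.

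\emph{Step 3 (expected size).} Fix $e$ and write $q=q_\kappa(e)$. First, $w_e q\le 1$: if $w_e\eta\ge1$ then $e\in G_\eta$ surely, so $\dist_{G_\eta}(u,v)=1\le\kappa$ and $p_\kappa(e,\eta)=0$. Conditioning on $r_e$, the graph $G_\eta\setminus e$ is independent of $r_e$, so
\[
\Pr[e\in F]=\int_0^{1/w_e} w_e\,\Pr[\dist_{G_\eta\setminus e}(u,v)>\kappa]\,d\eta \;\ge\; w_e\int_0^{q}\Pr[\dist_{G_\eta}(u,v)>\kappa]\,d\eta .
\]
The inequality uses $q\le 1/w_e$ and the fact that removing $e$ can only increase distances. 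By monotonicity of $p_\kappa(e,\cdot)$ and the definition of $q$, the integrand is at least $1/2$ for every $\eta<q$. Therefore $\Pr[e\in F]\ge w_e q/2$. Summing over edges,
\[
\tfrac12\sum_e w_e q_\kappa(e)\le \E{|F|}\le n^{1+O(1/\kappa)},
\]
which is the claim.

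\emph{Main obstacle.} The delicate point is Step 3. It needs a correct treatment of $q$ as a supremum, which is harmless because we only use $\eta<q$. It also needs the independence between the arrival of $e$ and the rest of $G_\eta$. Conditioning on $e$ being absent, rather than on $G_\eta$ itself, is what makes the integral computation clean.
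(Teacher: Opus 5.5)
\paragraph{Verdict.} The paper gives no proof of this lemma. It quotes it as Lemma~3 of \cite{Kapralov2014} and uses it as a black box, so there is nothing in the paper to compare your argument against. Judged on its own, your proof is correct and complete.

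\begin{itemize}
\item The arrival-time coupling reproduces the law of $G_\eta$ for each fixed $\eta$ and makes the family monotone.
\item The last-arriving-edge argument correctly shows that $F$ has girth greater than $\kappa+1$.
\item The computation of $\Pr[e\in F]$ is sound. Conditioning on $r_e$ leaves $G_\eta\setminus e$ independent of it, and deleting $e$ can only increase $\dist(u,v)$. You also correctly note $q_\kappa(e)\le \min\{1,1/w_e\}$, and $p_\kappa(e,\eta)\ge 1/2$ for all $\eta\le q_\kappa(e)$ by monotonicity.
\end{itemize}

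The resulting inequality $\tfrac12\sum_e w_e q_\kappa(e)\le \E{|F|}$ is exactly where the factor $2$ in the stated bound comes from.

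\paragraph{Minor corrections.} None of these affects validity.

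\begin{itemize}
\item $F$ is not the output of the greedy spanner. Greedy measures $\dist(u,v)$ only within the edges it has already accepted, whereas you measure it in all earlier-arriving edges. So in general $F$ is a proper subset of the greedy spanner $S$. This does not matter: you prove the girth bound for $F$ directly, and since $F\subseteq S$ you could equally have used $S$.
\item The condition $\tau_e\le 1/w_e$ in the definition of $F$ is vacuous, since $r_e\le 1$. Your integral up to $1/w_e$ uses $G_\eta$ with $\eta>1$; that is fine because those graphs are defined by the coupling. The lower bound only involves $\eta\le q_\kappa(e)\le 1$, where the coupled graph agrees with the paper's $G_\eta$.
\item To get literally the constant $2$, you need the Moore bound in the sharp form $|F|\le n^{1+1/k}$ for girth greater than $2k$, e.g.\ via the Alon--Hoory--Linial bound. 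The elementary version $n^{1+1/k}+n$ only gives $4n^{1+1/k}$. That is of the form $2n^{1+O(1/\kappa)}$ only when $\kappa=O(\log n)$. This is, however, the only regime in which the paper applies the lemma.
\end{itemize}
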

{As we will later binary-search for the values of the robust connectivities, we also need to show that the minimum of the robust connectivities is suitably lower bounded.}
\begin{claim}\label{clm:lb_qk}
    For a graph $G = (V, E, w)$ with maximum weight $W$, and any $\kappa > 1$, we have that 
    $$\min_{e \in E} q_{\kappa}(e) \geq \frac{1}{4 \cdot |E| \cdot W}.$$
\end{claim}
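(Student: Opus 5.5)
Fix an edge $e=(u,v)$ and set $\eta_0 := \frac{1}{4|E|W}$. Since $p_\kappa(e,\cdot)$ is non-increasing in $\eta$ (as remarked after the definition of $p_\kappa$), and $q_\kappa(e)$ is the largest $\eta$ with $p_\kappa(e,\eta)\ge 1/2$, it is enough to show $p_\kappa(e,\eta_0)\ge 1/2$. Then $q_\kappa(e)\ge \eta_0$, and taking the minimum over $e$ gives the claim.

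To bound $p_\kappa(e,\eta_0)$ from below, I will use the trivial sufficient event that $G_{\eta_0}$ contains no edges at all. In that case $u\neq v$ are disconnected, so $\dist_{G_{\eta_0}}(u,v)=\infty>\kappa$. Each edge $f$ is kept independently with probability $\min\{w_f\eta_0,1\}\le W\eta_0 = \frac{1}{4|E|}$. A union bound over the $|E|$ edges gives
\[
\Pr[G_{\eta_0}\text{ has an edge}] \le |E|\cdot \frac{1}{4|E|} = \frac14 .
\]
Hence
\[
p_\kappa(e,\eta_0)\ \ge\ \Pr[E(G_{\eta_0})=\emptyset]\ \ge\ \tfrac34\ \ge\ \tfrac12 .
\]
This argument does not depend on $\kappa$, so it holds for any $\kappa>1$. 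A sharper version is possible: it suffices to kill only the edge $e$ itself and the edges of paths of length at most $\kappa$. The crude union bound over all edges already gives the stated constant, though.

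The only point needing care is that "largest $\eta$" in \Cref{def:robust_connectivities} is well defined, i.e.\ that the supremum is attained. For fixed $\eta$-range, $p_\kappa(e,\eta)$ is a finite sum over edge subsets of products of terms $\min\{w_f\eta,1\}$ or $1-\min\{w_f\eta,1\}$, so it is continuous in $\eta$. Together with monotonicity, the set $\{\eta : p_\kappa(e,\eta)\ge 1/2\}$ is a closed interval $[0,q_\kappa(e)]$. Since it contains $\eta_0$, we get $q_\kappa(e)\ge\eta_0$.
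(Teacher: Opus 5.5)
Your proof is correct and is essentially the paper's argument. The paper applies Markov's inequality to the number of sampled edges crossing a fixed cut separating $u$ from $v$, whereas you union-bound over all of $E$ to show $G_{\eta_0}$ is empty with probability at least $3/4$; both rest on the same estimate $|E|\cdot W\eta_0 = 1/4$, since the paper's cut bound also only uses that the cut has at most $|E|$ edges. Like the paper, you tacitly need $\eta_0 = 1/(4|E|W)\le 1$ so that $\eta_0$ lies in $[0,1]$, the range over which $q_\kappa(e)$ is defined, and your continuity remark about the maximum being attained is a small addition the paper omits.
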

\begin{proof}
    We show that for $p = 1 / (4 |E| W)$, any pair $u, v \in V$ satisfies $\Pr[u, v \textnormal{ disconnected in } G_p] > 1/2$, from which the claim then immediately follows from the definition of $q_{\kappa}(e)$. To do so, consider any fixed cut $(S, V \setminus S)$ with $u \in S$ and $v \in V \setminus S$. Let $E(S, V \setminus S)$ denote the edges with one endpoint in $S$ and the other in $V \setminus S$. Then 
    $$
    \E{|E_{G_p}(S, V \setminus S)|} = \sum_{e \in E_G(S, V \setminus S)} \frac{w_e}{4 |E| W} \leq 1/4.
    $$
    Consequently, by Markov's inequality, we must have
    $$
    \Pr[u, v \textnormal{ connected in } G_p] \leq \Pr[|E_{G_p}(S, V \setminus S)| \geq 1] \leq 1/4.
    $$
\end{proof}
We finally need the following auxiliary claim, which will later be crucial for the analysis of our algorithm.
\begin{claim}\label{clm:help}
    {Let $\pi \in (0, 1]$. Then} for an edge $e$ with $q_{\kappa}(e) \leq \pi / 4$, we have that $p_{\kappa}(e, \pi) \leq 1/4 $.
\end{claim}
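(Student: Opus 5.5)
The idea is to write $G_\pi$ as stochastically dominating the union of two independent copies of $G_{\pi/2}$, and then use that the event $\dist > \kappa$ is monotone decreasing in the edge set.

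\emph{Step 1: $p_\kappa(e,\pi/2) < 1/2$.} Since $q_\kappa(e) \le \pi/4 < \pi/2 \le 1/2$, the value $\eta = \pi/2$ lies in $[0,1]$ and is strictly larger than $q_\kappa(e)$. By \Cref{def:robust_connectivities}, $q_\kappa(e)$ is the largest $\eta$ with $p_\kappa(e,\eta)\ge 1/2$. Hence $p_\kappa(e,\pi/2) < 1/2$.

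This is the one delicate point: the inference needs the maximum in the definition to be well defined, so that every $\eta > q_\kappa(e)$ violates the condition. To see this, note that $p_\kappa(e,\eta)$ is a finite sum over edge subsets of products of the terms $\min\{w_f\eta,1\}$ and $1-\min\{w_f\eta,1\}$. It is therefore continuous in $\eta$, and together with monotonicity the set $\{\eta : p_\kappa(e,\eta)\ge 1/2\}$ is a closed interval $[0,q_\kappa(e)]$. The slack between $\pi/4$ and $\pi/2$ means we never evaluate $p_\kappa$ at the boundary point itself, so no boundary issue arises.

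\emph{Step 2: coupling.} Let $H_1,H_2$ be independent copies of $G_{\pi/2}$ and set $H = H_1 \cup H_2$. Each edge $f$ appears in $H$ independently with probability
$$1-(1-\min\{w_f\pi/2,1\})^2 \le 2\min\{w_f\pi/2,1\} \le \min\{w_f\pi,1\}$$
(for the last step, if $w_f\pi/2\ge 1$ both sides equal $1$). So we can couple edge by edge, using one uniform variable per edge, so that $H \subseteq G_\pi$.

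\emph{Step 3: conclude.} Distances can only shrink when edges are added. Hence under the coupling,
$$\dist_{G_\pi}(u,v)>\kappa \;\Rightarrow\; \dist_H(u,v)>\kappa \;\Rightarrow\; \dist_{H_1}(u,v)>\kappa \text{ and } \dist_{H_2}(u,v)>\kappa.$$
By independence of $H_1$ and $H_2$ and Step 1,
$$p_\kappa(e,\pi) \le p_\kappa(e,\pi/2)^2 < 1/4.$$
This proves the claim, in fact with strict inequality.
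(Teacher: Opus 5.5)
Your proof is correct and follows the paper's argument: a union of two independent sparser samples is stochastically dominated by $G_\pi$, and the distance event is monotone, so the probability squares. The only change is that you use copies of $G_{\pi/2}$ instead of $G_{\pi/4}$, which gives $p_\kappa(e,\pi/2)<1/2$ directly from the definition and avoids the boundary case $q_\kappa(e)=\pi/4$, where the paper's step $p_\kappa(e,\pi/4)\le 1/2$ tacitly relies on continuity of $\eta\mapsto p_\kappa(e,\eta)$.

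One small slip: your displayed middle inequality $2\min\{w_f\pi/2,1\}\le\min\{w_f\pi,1\}$ is false whenever $w_f\pi>1$, not only when $w_f\pi\ge 2$. Instead, bound $1-(1-\min\{w_f\pi/2,1\})^2$ by $1$ always, and by $w_f\pi$ when $w_f\pi/2<1$.
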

\begin{proof}
{We prove the claim by comparing \(G_\pi\) to the union of two independent copies of \(G_{\pi/4}\). The proof has two steps. First, the assumption \(q_\kappa(e) \le \pi/4\) implies that a single copy of \(G_{\pi/4}\) fails to contain a \(u\)-to-\(v\) path of length at most \(\kappa\) with probability at most \(1/2\). Second, the union of two independent copies of \(G_{\pi/4}\) is stochastically dominated by \(G_\pi\). Therefore, if \(G_\pi\) has no \(u\)-to-\(v\) path of length $\kappa$, then the union of the two smaller samples also has no such path, and this can happen only if both independent copies fail to contain such a path.}

Let us formally start by noting that by the definition of $q_{\kappa}(e)$ as the largest $\eta$ with $p_{\kappa}(e,\eta)\ge 1/2$, the assumption $q_{\kappa}(e)\le \pi/4$ implies
\[
p_{\kappa}\big(e,\tfrac{\pi}{4}\big)\le \tfrac12.
\]

Now take two independent copies of $G_{\pi/4}$ (call them $G^{(1)}_{\pi/4}$ and $G^{(2)}_{\pi/4}$) and let $H$ be the graph whose edge set is the union of the edge sets of these two copies. Consider any fixed edge $f$. Its inclusion probability in a single copy $G_{\pi/4}$ is
\[
p'_f=\Pr[f\in G_{\pi/4}]=\min\{w_f\cdot (\pi/4),\,1\}.
\]
Hence the inclusion probability of $f$ in $H$ is
\[
\Pr[f\in H]=1-(1-p'_f)^2 = 2p'_f - (p'_f)^2 \le 2p'_f.
\]
If $p'_f<1$, then $2p'_f \le 2\cdot(w_f \pi/4) = w_f \pi/2 \le w_f \pi$, and if $p'_f=1$, then $\Pr[f\in H]=1\le 1=\min\{w_f \pi,1\}$. Thus in all cases
\[
\Pr[f\in H] \le \min\{w_f\pi,\,1\} = \Pr[f\in G_\pi].
\]
{Therefore, there is an edgewise monotone coupling under which $H\subseteq G_\pi$ almost surely, i.e., $E_H$ is stochastically dominated by $E_{G_{\pi}}$.}

Because $H\subseteq G_\pi$, the event $\{\dist_{G_\pi}(u,v)>\kappa\}$ (no $u\!-\!v$ path of length $\le\kappa$ in $G_\pi$) implies $\{\dist_H(u,v)>\kappa\}$. Hence
\[
p_{\kappa}(e,\pi)=\Pr[\dist_{G_\pi}(u,v)>\kappa] \le \Pr[\dist_H(u,v)>\kappa].
\]

But $H$ is the union of two independent copies of $G_{\pi/4}$, so $H$ can have no $u\!-\!v$ path of length $\le\kappa$ only if \emph{both} copies have no such path. Therefore
\[
\Pr[\dist_H(u,v)>\kappa] \leq \big(p_{\kappa}(e,\tfrac{\pi}{4})\big)^2 \le \big(\tfrac12\big)^2 = \tfrac14.
\]

Combining the inequalities gives $p_{\kappa}(e,\pi)\le 1/4$, as required.
\end{proof}

{Note that estimating robust connectivities can be done by answering approximate distance queries between vertices in a graph. While sequential constructions such as that of Thorup and Zwick~\cite{Thorup_Zwick} produce efficient approximate distance oracles with provable stretch-space tradeoffs, parallel constructions with similar guarantees are more complex. In the following sections, we will show how one can compute these estimates without resorting to approximate distance queries and instead use low-diameter decompositions, which are easy and fast to compute in parallel.}

\subsection{Low Diameter Decompositions}
We need the following tool. Given a partition $\mathcal{X}$ of the vertex set of an unweighted graph $G = (V, E)$, define by $E_{del}$ the set of edges of $G$ with endpoints in different clusters.
\begin{definition}[$(\beta, D)$-LDD]
    We call a random partition $\mathcal{X}$ of $V$ a $(\beta, D)$-LDD, if, 
    \begin{enumerate}
        \item For every $u, v \in X \in \mathcal{X}$, we have $\dist_{G[X]}(u, v) \leq D$, and
        \item for every $e \in E$, we have that $\Pr[e \in E_{del}] \leq \beta$.
    \end{enumerate}
\end{definition}
Note the following consequence.
\begin{claim}\label{clm:ldd-path}
    If $\mathcal{X}$ is a $(\beta, D)$-LDD, and $u, v \in V$ are vertices satisfying $\dist_G(u, v) \leq \frac{1}{4 \beta}$, then
    $$
    \Pr[u, v \in X \in \mathcal{X}] \geq 3/4.
    $$
\end{claim}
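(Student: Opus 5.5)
Fix a shortest path $P = (u = x_0, x_1, \dots, x_k = v)$ in $G$ with $k = \dist_G(u, v) \leq \frac{1}{4\beta}$. Since $\mathcal{X}$ is a partition, we only need to show that, with probability at least $3/4$, every edge of $P$ stays inside a single cluster, i.e., no edge of $P$ lies in $E_{del}$.

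The key observation is that if no edge $(x_{i-1}, x_i)$ of $P$ is in $E_{del}$, then each consecutive pair lies in a common cluster. Because clusters are disjoint, this forces all of $x_0, \dots, x_k$ into the same cluster $X \in \mathcal{X}$, and in particular $u, v \in X$. Therefore
$$
\Pr[u, v \textnormal{ not in a common cluster}] \leq \Pr[\exists\, i : (x_{i-1}, x_i) \in E_{del}].
$$

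Next, we apply a union bound together with the second property of a $(\beta, D)$-LDD:
$$
\Pr[\exists\, i : (x_{i-1}, x_i) \in E_{del}] \leq \sum_{i=1}^k \Pr[(x_{i-1}, x_i) \in E_{del}] \leq k\beta \leq \frac{1}{4}.
$$
Taking complements gives $\Pr[u, v \in X \in \mathcal{X}] \geq 3/4$. The case $u = v$ is trivial, and if $u$ and $v$ are disconnected then $\dist_G(u, v) = \infty$ and the hypothesis fails.

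There is no real obstacle here. The only point needing care is the connectivity step: edges of $P$ that are not cut keep their endpoints together, and transitivity along the path, using disjointness of the partition, chains these into a single cluster. The diameter property of the LDD is not needed for this claim.
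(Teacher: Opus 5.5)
Your proof is correct and follows essentially the same argument as the paper: fix a shortest $u$-$v$ path of length at most $\frac{1}{4\beta}$, then union-bound the probability that any of its edges lies in $E_{del}$ by $l\beta \leq 1/4$. You also spell out the chaining step, namely that uncut path edges force all path vertices into one cluster of the partition, which the paper leaves implicit.
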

\begin{proof}
    Let $P_{u, v} = e_1, \ldots, e_l$ be a $u$-$v$ shortest path of length $l \leq \frac{1}{4 \beta}$. By a union bound, 
    \begin{align*}
            \Pr[u, v \textnormal{ in different clusters}] &\leq \Pr[\exists i \in [l]: e_i \in E_{del}]\\
            &\leq \sum_{i=1}^l \beta = l \beta \leq 1/4. 
    \end{align*}
\end{proof}

\begin{theorem}[See \cite{mpxspanner} and \cite{mpx}]\label{thm:ldd}
    There is an algorithm that takes an unweighted graph with $n$ vertices, $m$ edges, a parameter $\beta \leq 1 / 2$ and samples from a $\left( \beta, \alpha_{LDD} \cdot \beta^{-1} \right)$-LDD for some $\alpha_{LDD} \in O\left(\log n\right)$. The algorithm requires $O\left(\beta^{-1} \log n \log^* n\right)$ depth and $O(m)$ work.
\end{theorem}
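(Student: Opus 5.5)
The plan is the exponential-shift clustering of Miller, Peng and Xu. Every vertex $v$ draws an independent shift $\delta_v \sim \mathrm{Exp}(\beta)$. Every vertex $u$ is assigned to the center $c(u) := \arg\min_{v \in V} \left(\dist_G(u,v) - \delta_v\right)$, with ties broken by a fixed order on $V$. The clusters are the preimages $X_c = c^{-1}(c)$. I would verify the two LDD properties and then describe the parallel implementation.

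\emph{Strong diameter.} Suppose $c(u) = c$ and let $x$ lie on a shortest $c$–$u$ path. For any $v$,
$$\dist(x,c) - \delta_c = \dist(u,c) - \dist(u,x) - \delta_c \le \dist(u,v) - \dist(u,x) - \delta_v \le \dist(x,v) - \delta_v,$$
so $c(x) = c$, with the consistent tie-breaking handling equality. Hence each cluster contains shortest paths from every member to its center, so $G[X_c]$ is connected. Moreover $\dist_{G[X_c]}(u,c) = \dist_G(u,c) \le \delta_c - \delta_u \le \max_v \delta_v$, since $u$ itself is a candidate center. Because $\Pr[\max_v \delta_v > (k+1)\ln n / \beta] \le n^{-k}$, the diameter is at most $2(k+1)\ln n/\beta = \alpha_{LDD}\beta^{-1}$ with high probability. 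The definition asks for this deterministically, so I would rerun whenever $\max_v \delta_v$ exceeds the threshold. This conditioning changes each event's probability by at most a factor $1/(1-n^{-k})$, which is absorbed by running with $\beta/2$ in place of $\beta$.

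\emph{Edge cut probability.} This is the step I expect to need the most care. Fix an edge $e=(u,w)$ and let $Y_v := \delta_v - \dist(u,v)$. If the largest and second-largest values of $Y_v$ differ by more than $1$, then $u$ and $w$ pick the same center. Indeed, each distance changes by at most $1$ between $u$ and $w$, so the leader keeps a margin of more than $1 - 1 - \ldots \geq 0$; more carefully, a gap strictly exceeding $2$ suffices trivially, and the refined MPX argument gets gap $1$ by comparing $Y$ values at $u$ directly. Now condition on the identity of the maximizer and on all other shifts. The memoryless property of the exponential distribution says the overshoot of the maximum over the second maximum is again $\mathrm{Exp}(\beta)$. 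So the probability the gap is at most $1$ is at most $1 - e^{-\beta} \le \beta$. If the cleaner gap-$1$ argument resists formalization, I would instead prove the gap-$2$ version, which gives the bound $2\beta$, and rescale $\beta$ by a constant. This is harmless since $\alpha_{LDD} \in O(\log n)$ absorbs constants.

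\emph{Implementation.} Round the shifts and let $\delta_{\max} = \max_v \delta_v$. Run a single parallel BFS in which vertex $v$ "wakes up" as a source at time $\lceil \delta_{\max} - \delta_v \rceil$ if it has not yet been claimed. In each round, frontier vertices claim their unclaimed neighbors; concurrent writes resolve conflicts in CRCW, with fractional parts of the shifts used as tie-breakers. The number of rounds is $O(\delta_{\max}) = O(\beta^{-1}\log n)$ with high probability. Each round uses $O(\log^* n)$ depth for the required compaction and approximate prefix sums. Every edge is scanned $O(1)$ times, giving $O(m)$ work. The rounding of shifts to integer start times perturbs the cut probability only by constants, again absorbed into $\beta$.
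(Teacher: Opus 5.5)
Your proposal is the Miller--Peng--Xu exponential-shift clustering that the paper cites without giving a proof of its own. Taken with your gap-$2$ fallback, the constant rescaling of $\beta$, and the rerun that conditions on small $\max_v\delta_v$, it correctly establishes the stated bounds. You should, however, drop the primary gap-$1$ claim: a top gap larger than $1$ among the scores at $u$ does \emph{not} force $u$ and $w$ to share a center. Take the path with vertices $c,u,w,c'$ in this order and $\delta_c=3$, $\delta_{c'}=\tfrac52$, $\delta_u=\delta_w=0$. The scores $\delta_v-\dist(u,v)$ for $v=c,c',u,w$ are $2,\tfrac12,0,-1$, so the top gap at $u$ is $\tfrac32$. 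Yet at $w$ the center $c'$ scores $\tfrac32$ against $1$ for $c$, so the edge is cut. MPX obtain gap $1$ by measuring the scores at the midpoint of the edge, where each endpoint's scores are within $\tfrac12$, not at $u$.
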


We remark that a highly efficient implementation of the above algorithm exists \cite{dhulipala2018theoretically}.

\section{$R_e$ estimation.}
We start by presenting the main lemma that our algorithm is based on. Essentially, given a parameter $\pi$, we show how to compute whether or not the robust connectivity of an edge satisfies $q_{\kappa}(e) \leq O(\pi)$, up to an error of $O(\alpha_{LDD})$. This will then allow us to binary search for good approximations of the robust connectivities. See \Cref{alg:ldd_edge_filter} for pseudocode. {Intuitively, \cref{alg:ldd_edge_filter} creates sparser copies of the initial graph and uses low-diameter decompositions as an indicator of whether the endpoints of an edge are still close together. If the endpoints are too far apart often, then the algorithm takes that as evidence that the leverage score is higher than the current parameter checked.}

\begin{algorithm}[H]
\caption{Robust Connectivity Decide}
\label{alg:ldd_edge_filter}
\begin{algorithmic}[1]

\REQUIRE Graph $G=(V,E)$; parameters $\kappa \geq 1$, $\pi \in [0,1]$
\STATE $\beta \gets 1/(4\kappa)$
\STATE $t \gets c \log n$ for sufficiently large constant $c$

\FOR{$i = 1$ to $t$ \textbf{in parallel}}
    \STATE Sample an independent copy $G_{\pi}^{(i)} \sim G_{\pi}$
    \STATE Run $\mathrm{LDD}(G_{\pi}^{(i)}, \beta)$
    \STATE Let $C^{(i)}(v)$ denote the component ID of each vertex $v \in V$
\ENDFOR

\FOR{each edge $e=(u,v) \in E$ \textbf{in parallel}}
    \STATE $Z_e \gets \frac{1}{t} \sum_{i=1}^t \mathbf{1}\setof{C^{(i)}(u) = C^{(i)}(v)}$
\ENDFOR

\STATE $\widetilde{E} \gets \{\, e \in E \mid Z_e \ge 8.5/16 \,\}$

\STATE \textbf{return} $\widetilde{E}$

\end{algorithmic}
\end{algorithm}

\begin{lemma}\label{lma:separation}
    \Cref{alg:ldd_edge_filter}, given as input a graph $G = (V, E, w)$ and parameters $\kappa \geq 1, \pi \in [0, 1]$ returns an edge set $\Tilde{E} \subseteq E$ with the following properties.
    \begin{enumerate}
        \item For every edge $e$ with $q_{4 \cdot \kappa \cdot \alpha_{LDD}}(e) \geq \pi$, we have that $e \not \in \Tilde{E}$. \label{lma:separation:1}
        \item For every edge $e$ with $q_{\kappa}(e) \leq \pi / 4$, we have that $e \in \Tilde{E}$. \label{lma:separation:2}
    \end{enumerate}
    The algorithm requires $O(\kappa \log n \log^* n)$ depth, $O(m \log n)$ work and is correct with high probability.
\end{lemma}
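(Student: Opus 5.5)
The proof works edge by edge. For each edge $e=(u,v)$ let $X_i = \mathbf{1}\setof{C^{(i)}(u) = C^{(i)}(v)}$. These are independent Bernoulli variables across $i$, because the $t$ copies $G^{(i)}_\pi$ and the $t$ LDD runs use independent randomness. Write $Z_e = \frac1t\sum_i X_i$ and $\mu = \E{Z_e} = \Pr[X_1=1]$. Both properties then follow the same pattern: bound $\mu$ in each case, concentrate $Z_e$ around it using \Cref{thm:chernoff} with $t = c\log n$, and take a union bound over the $m \le n^2$ edges. The complexity bounds come straight from \Cref{thm:ldd} with $\beta = 1/(4\kappa)$. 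One LDD costs $O(\kappa\log n\log^* n)$ depth and $O(m)$ work, and sampling each $G_\pi^{(i)}$ costs $O(1)$ depth and $O(m)$ work. Running the $t = O(\log n)$ copies in parallel gives $O(m\log n)$ total work. Computing the $Z_e$ means summing $t$ indicators per edge, which is cheap in depth, so the LDD term dominates.

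\textbf{Property 2.} Suppose $q_\kappa(e)\le \pi/4$. By \Cref{clm:help}, $\Pr[\dist_{G_\pi}(u,v) > \kappa] \le 1/4$. On the complementary event $\dist_{G_\pi}(u,v)\le \kappa = \frac{1}{4\beta}$, and \Cref{clm:ldd-path} gives a same-cluster probability of at least $3/4$, conditioning on the sampled graph and using the fresh LDD randomness. Hence $\mu \ge \frac34\cdot\frac34 = 9/16$. The lower-tail bound of \Cref{thm:chernoff} with a constant $\delta$ chosen so that $(1-\delta)\cdot 9/16 \ge 8.5/16$ (for example $\delta = 1/18$) gives $\Pr[Z_e < 8.5/16] \le e^{-\delta^2 t\cdot (9/16)/2} \le n^{-3}$ for $c$ large enough. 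So $e\in\tilde E$ with high probability.

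\textbf{Property 1.} Suppose $q_{4\kappa\alpha_{LDD}}(e)\ge \pi$, and set $\kappa' = 4\kappa\alpha_{LDD}$. We need an upper bound on $\mu$. The event $X_1 = 1$ forces $u,v$ into the same cluster $X$. By the LDD diameter guarantee of \Cref{thm:ldd}, this gives $\dist_{G_\pi}(u,v)\le \dist_{G_\pi[X]}(u,v)\le \alpha_{LDD}\beta^{-1} = \kappa'$. Hence $\mu \le \Pr[\dist_{G_\pi}(u,v)\le \kappa'] = 1 - p_{\kappa'}(e,\pi)$. We want this to be at most $1/2$. Since $q_{\kappa'}(e)$ is the largest $\eta$ with $p_{\kappa'}(e,\eta)\ge 1/2$ and $\pi \le q_{\kappa'}(e)$, monotonicity of $p_{\kappa'}(e,\cdot)$ gives $p_{\kappa'}(e,\pi)\ge p_{\kappa'}(e,q_{\kappa'}(e))\ge 1/2$. (The maximum is attained, since $p$ is a polynomial, hence continuous, in $\eta$.) Therefore $\mu\le 1/2$.

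This case is the main subtlety: $\mu$ may be far smaller than $1/2$, so a Chernoff bound stated relative to $\mu$ does not work. Instead I apply the second form of \Cref{thm:chernoff} with $c = 1/2 \ge \mu$ and $\delta = 1/16$, so that $(1+\delta)\cdot\frac12 = 8.5/16$. This gives $\Pr[Z_e \ge 8.5/16] \le e^{-t/(3\cdot 512)}$, which is at most $n^{-3}$ once the constant in $t$ is large enough. So $e\notin\tilde E$ with high probability.

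A union bound over all edges in both cases yields correctness with high probability.
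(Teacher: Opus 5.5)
Your proposal is correct and follows the paper's proof essentially step for step. It uses the same two expectation bounds: $\E{X_e}\ge 9/16$ via \Cref{clm:help} and \Cref{clm:ldd-path}, and $\E{X_e}\le 1-p_{4\kappa\alpha_{LDD}}(e,\pi)\le 1/2$ via the LDD diameter guarantee. It then applies the same two tails of \Cref{thm:chernoff} (with $c=1/2$ for the upper tail), a union bound over edges, and the same resource accounting from \Cref{thm:ldd}.

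The differences are cosmetic. You choose $\delta$ so that the threshold $8.5/16$ is hit exactly, whereas the paper uses $\delta=1/32$ with slack. You also explicitly justify $p_{\kappa'}(e,\pi)\ge 1/2$ via monotonicity and attainment of the maximum in the definition of $q_{\kappa'}(e)$, which the paper simply asserts.
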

\begin{proof}
    {The proof separates edges into two regimes according to their
    robust connectivity. The statistic $Z_e$ estimates the probability
    that the endpoints $(u, v)$ of $e$ are placed in the same LDD cluster after
    sampling $G_{\pi}$. We show that this probability is at least $9/16$
    for edges with $q_\kappa(e)\le \pi/4$, and at most $1/2$ for edges
    with $q_{4\kappa\alpha_{\mathrm{LDD}}}(e)\ge \pi$. Since the threshold
    $8.5/16$ lies between these two values, averaging
    $O(\log n)$ independent trials and applying a Chernoff bound
    separates the two cases for all edges simultaneously with high
    probability.

   % The algorithm samples in parallel $t$ independent copies of $G_{\pi}$, where $t$ is to be specified later. Call these graphs $G_{\pi}^{(1)}, \ldots, G_{\pi}^{(t)}$. It then runs an LDD from \Cref{thm:ldd} with parameter $\beta = 1/(4 \kappa)$ on each of these graphs.
    
    Formally, for every $i \in [t]$ and edge $e =(u, v) \in E$, let $X_e^{(i)}$ be the indicator variable that is $1$ if $u, v$ share an LDD component in $G_{\pi}^{(i)}$, and $0$ otherwise. That is, $X_e^{(i)} = \mathbf{1}\setof{C^{(i)}(u) = C^{(i)}(v)}$ in the notation of the pseudo-code. 
    
    We now first analyze the case \(q_\kappa(e)\le \pi/4\). We start by noticing that $\Pr[X_e = 1 | \dist_{G_{\pi}}(u, v) \leq \kappa] \geq 3/4$ by \Cref{clm:ldd-path}. So by writing $$\E{X_e} \geq \E{X_e|\dist_{G_{\pi}}(u, v) \leq \kappa} \cdot \Pr[\dist_{G_{\pi}}(u, v) \leq \kappa],$$ we can deduce that $\E{X_e} \geq \frac{3}{4} \Pr[\dist_{G_{\pi}}(u, v) \leq \kappa] = \frac{3}{4} (1 - p_{\kappa}(e, \pi))$. Observe now that by \Cref{clm:help}, for an edge $e$ with $q_{\kappa}(e) \leq \pi /4$, we also have $p_{\kappa}(e, \pi) \leq 1/4$. Consequently, for an edge with $q_{\kappa}(e) \leq \pi / 4$, we must have $\E{X_e} \geq (3/4)^2 = 9/16$.
    
    Let us now consider the edges $e$ for which $q_{4 \kappa \alpha_{\operatorname{LDD}}}(e) \geq \pi$. First note that as the clusters of the LDD have diameter at most $\alpha_{\operatorname{LDD}} \beta^{-1}$, it must be the case that $$\Pr[X_e = 1 | \dist_{G_{\pi}}(u, v) > \alpha_{LDD} \cdot \beta^{-1}] = 0.$$ Similar to before, by decomposing by conditional expectations, we can thus conclude that 
    \begin{align*}
    & \E{X_e} \leq \Pr[\dist_{G_{\pi}}(u, v) \leq \alpha_{LDD} \cdot \beta^{-1}] = 1 - p_{\alpha_{LDD} \cdot \beta^{-1}}(e, \pi).
    \end{align*}
    
    Now recall the definition of $q_{\alpha_{LDD} \cdot \beta^{-1}}(e)$ as the largest $\eta \in [0, 1]$ such that $p_{\alpha_{LDD} \cdot \beta^{-1}}(e, \eta) \geq 1/2$. As in the case we are currently considering we assume that $q_{\alpha_{LDD} \cdot \beta^{-1}}(e) \geq \pi$, we can thus conclude that also $p_{\alpha_{LDD} \cdot \beta^{-1}}(e, \pi) \geq 1/2$. Plugging this into the equation displayed above shows that $q_{\alpha_{LDD} \cdot \beta^{-1}}(e) \geq \pi$ indeed implies that $\E{X_e} \leq 1/2$.

    Now, let us apply the Chernoff bound from \Cref{thm:chernoff}. First, by choosing $t \in O(\log n)$ large enough, it follows immediately from the lower bound supplied in that theorem that $\Pr[Z_e \leq (1-\frac{1}{32}) \E{Z_e}] \leq n^{-10}$. In particular, by a union bound, with probability at least $1-n^{-8}$, it must be the case that for all edges $e$ with $q_{\kappa}(e) \leq \pi/4$ it simultaneously holds that $$Z_e \geq (1-\frac{1}{32}) \E{Z_e} \geq (1-\frac{1}{32}) \frac{9}{16} > 8.5 / 16.$$ Similarly, as $\E{Z_e} \leq 1/2$ for all edges $e$ with $q_{4 \kappa \alpha_{\operatorname{LDD}}}(e) \geq \pi$, we can use the upper-bound of the theorem (and choosing $c = 1/2$ in \Cref{thm:chernoff}) to conclude in an analogous fashion that all such edges simultaneously satisfy that $$Z_e < (1+\frac{1}{32})\frac{1}{2} < 8.5 / 16$$ with probability at least $1-n^{-8}$. Thus, the returned set $$\Tilde{E} := \setof{e \in E_G: Z_e \geq 8.5/16}$$ indeed satisfies the requirements with probability at least $1-2n^{-8}$ by one final application of the union bound.

    Note that the depth is given by the depth of \Cref{thm:ldd}, so it is $O(\beta^{-1} \log n \log^* n) = O(\kappa \log n \log^* n)$. The total work is $O(m \log n)$.}
\end{proof}

The next lemma shows how to use the previous lemma to efficiently binary-search for approximations to the robust connectivities. 
\begin{lemma}\label{lma:rc_estimates}
    There exists a parallel algorithm that, given as input a graph $G = (V, E, w)$ and $\kappa > 0$, computes estimates $\hat{q}(e)$ satisfying $q_{4 \kappa \alpha_{LDD}}(e) \leq \hat{q}(e) \leq 8 \cdot q_{\kappa}(e)$.
    The algorithm requires $O(\kappa \log n \log^* n)$ depth, $O(m \log n \log nW)$ work and is correct with high probability.
\end{lemma}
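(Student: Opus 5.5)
We run \Cref{alg:ldd_edge_filter} \emph{in parallel} on a geometric grid of values of $\pi$, rather than sequentially binary searching. Write $\kappa' := 4\kappa\alpha_{LDD}$ and let $\pi_j := 2^{-j}$ for $j = 0, 1, \ldots, J$ with $J := \lceil \log_2(4|E|W)\rceil + 2 = O(\log nW)$. For each $j$ we call \Cref{alg:ldd_edge_filter} with parameters $(\kappa, \pi_j)$ to obtain $\Tilde{E}_j$. All $J+1$ calls are independent, so the depth stays $O(\kappa \log n \log^* n)$ by \Cref{lma:separation} (plus $O(\log\log nW)$ for aggregation). The total work is $O(m\log n)\cdot O(\log nW) = O(m \log n \log nW)$, which matches the claimed bound. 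A union bound over the $O(\log nW)$ calls keeps the failure probability polynomially small, assuming $W \le \mathrm{poly}(n)$ or that $c$ in $t$ is scaled by $\log(nW)/\log n$. For each edge we then set
\[
\hat q(e) := 2\pi_{j^*(e)}, \qquad j^*(e) := \min\{ j : e \notin \Tilde{E}_j \}.
\]
If no such $j$ exists we set $j^*(e) := J+1$, although we argue below that this case does not occur.

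\textbf{Upper bound $\hat q(e) \le 8 q_\kappa(e)$.} Let $j_0$ be the smallest $j$ with $\pi_j \le 4 q_\kappa(e)$. Then $q_\kappa(e) \ge \pi_{j_0}/4$, and by minimality $\pi_{j_0} > 2q_\kappa(e)$ (or $j_0 = 0$). I need to show that $e \notin \Tilde{E}_j$ for some $j \le j_0$. The second item of \Cref{lma:separation} only says that small $q_\kappa$ forces membership, so it cannot give non-membership. Instead I use the first item at the scale $\kappa'$. Because $p_{\kappa}(e,\eta)$ is increasing in the threshold, $q_{\kappa'}(e) \ge q_\kappa(e)$. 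Rather than bounding $q_\kappa$, I therefore define $j^*$ through the item that yields non-membership. Let $j_1$ be the least $j$ with $\pi_j \le q_{\kappa'}(e)$. By \Cref{clm:lb_qk} and the choice of $J$, such a $j_1 \le J$ exists. By item 1, $e \notin \Tilde{E}_{j_1}$, so $j^*(e) \le j_1$ and $\hat q(e) = 2\pi_{j^*} \ge 2\pi_{j_1} > q_{\kappa'}(e)$, since $\pi_{j_1} > q_{\kappa'}(e)/2$ or $j_1 = 0$. In the case $j_1 = 0$ we get $\hat q = 2 \ge 1 \ge q_{\kappa'}(e)$. This proves the lower bound $q_{4\kappa\alpha_{LDD}}(e) \le \hat q(e)$.

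\textbf{Lower bound side, i.e.\ $\hat q(e) \le 8 q_\kappa(e)$.} Since $e \notin \Tilde{E}_{j^*}$, the contrapositive of item 2 of \Cref{lma:separation} gives $q_\kappa(e) > \pi_{j^*}/4$. Hence $\hat q(e) = 2\pi_{j^*} < 8 q_\kappa(e)$.

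The main obstacle is ensuring that item 2 is applied at exactly the index where non-membership occurs, and that the first non-membership index exists. Defining $j^*$ as the \emph{first} excluded index handles the first issue: the upper bound needs only one excluded index, while the lower bound needs the first excluded index to be no later than $j_1$. \Cref{clm:lb_qk} together with the range of $J$ handles the second issue. I would also check monotonicity: although the sets $\Tilde{E}_j$ need not be nested, neither argument uses nestedness, so no monotonicity of the filter in $j$ is required.
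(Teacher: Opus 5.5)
Your proof is correct and is essentially the paper's: the same grid $\pi_j=2^{-j}$ of parallel calls to \Cref{alg:ldd_edge_filter}, item~1 of \Cref{lma:separation} for the lower bound, the contrapositive of item~2 for the upper bound, and \Cref{clm:lb_qk} for the range. The one difference is that you read off the first excluded index with $\hat q(e)=2\pi_{j^*(e)}$, whereas the paper uses the last included index $j_e$ with $\hat q(e)=\pi_{j_e}$; your choice guarantees the index exists, while the paper's $j_e$ is undefined for an edge lying in no $\widetilde E^{(j)}$ (possible when $q_\kappa(e)>1/4$, e.g.\ a light bridge) and needs a default such as $\hat q(e)=1$. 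One slip to fix: the aside $q_{\kappa'}(e)\ge q_\kappa(e)$ is backwards, since $p_\kappa(e,\eta)$ is non-increasing in $\kappa$ and hence $q_{4\kappa\alpha_{LDD}}(e)\le q_\kappa(e)$, though nothing in your argument depends on it (your two bold headings also have the upper/lower labels swapped).
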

\begin{proof}
    {The algorithm searches over geometrically decreasing values
    \(\pi_i=2^{-i}\) with $i = 0, 1, \ldots, \log(8mW)$. Let $\Tilde{E}^{(i)}$ denote the returned set of edges. For a fixed value \(\pi_i\), \Cref{lma:separation} tells us that \Cref{alg:ldd_edge_filter} separates two cases: edges with 
    \(q_\kappa(e) \leq \pi_i /4\) are included in \(\widetilde E^{(i)}\), while edges with \(q_{4\kappa\alpha_{\mathrm{LDD}}}(e) \geq \pi_i\) are excluded. This allows us to binary-search for the value of $q_{\kappa}(e)$ by setting it to be equal to $\pi_j$, where $j$ is the last time that edge $e$ is included in the set $\Tilde{E}^{(j)}$. In this proof, we will first separately argue for the lower- and upper bound of the $\hat{q}_{\kappa}(e)$, and then conclude by showing the resource bounds. 
    
    To start formally, let us define for any edge $e \in E_G$ the quantity
        $$j_e :=  \max \left\{j: \textnormal{s.t. }e \in \Tilde{E}^{(j)}\right\}.$$ The estimates our algorithm returns are then $$\hat{q}_{\kappa}(e) = \pi_{j_e} = 2^{-j_e}.$$
        
        Let us now formally argue why this algorithm is correct. We can assume by a union bound that all calls to \Cref{lma:separation} succeed simultaneously. We prove the lower and upper bounds on
\(\widehat q(e)\) for an arbitrary fixed edge \(e\).

So let us fix an edge $e$. We start by proving that $q_{4 \kappa \alpha_{LDD}}(e) \leq \hat{q}(e)$. To do so, note that as $e \in \Tilde{E}^{(j_e)}$, we know by \Cref{lma:separation:1} of the preceding lemma that $q_{4\kappa \alpha_{LDD}}(e) < \pi_{j_e} = 2^{-j_e} = \hat{q}(e)$. This already finishes the proof of the lower bound.

We now prove that $\hat{q}_{\kappa}(e) \leq 8 q_{\kappa}(e)$. We proceed by a proof by contradiction, so assume that $q_{\kappa}(e) \leq 2^{-j_e-3} = \hat{q}_{\kappa}(e) / 8$. Then we know by \Cref{lma:separation:2} of \Cref{lma:separation}, that we would have that $e \in \Tilde{E}^{(j_e+1)}$, because $\Tilde{E}^{(j_e+1)}$ contains all edges $e'$ satisfying $q_{\kappa}(e') \leq 2^{-(j_e+1)}/4 = 2^{-j_e-3}$. This is in contradiction to the definition of $j_e$ being the \emph{largest} value of $j$ such that $e \in \Tilde{E}^{(j)}$. We can conclude that $q_{\kappa}(e) > 2^{-j_e-3} = \hat{q}_{\kappa}(e) / 8$. Note that in the last part we assumed that there is an instance of \Cref{alg:ldd_edge_filter} with parameter $2^{-(j_e+1)}$, i.e., we assumed that $j_e+1 \leq \log(8 m W)$. This is the case, as by \Cref{clm:lb_qk}, we must have that $q_{\kappa}(e) \geq 1/(4 m W)$, so that $j_e + 1 \leq \log(4 m W) + 1 \leq \log(8mW)$. This also explains the reason why we need to choose $L = \log(8mW)$: because $1/(4 m W)$ is a lower bound for our binary-search range on the robust connectivities. This concludes the proof of correctness.

We now argue about the resource bounds. First, we observe that the depth of the algorithm is given by the depth of \Cref{lma:separation}, i.e., the depth is in $O(\kappa \log n \log^* n)$. Additionally, the work is dominated by the work of all $O(\log nW)$ calls to the same lemma, so it is in $O(m \log n \log nW)$. This concludes the proof.}
\end{proof}
Using these estimates, we can now also construct estimates of the effective resistances, as described in the lemma below. 
\begin{lemma}\label{thm:er_estimates}
    There exists a parallel algorithm that, given as input a graph $G = (V, E, w)$ with maximum weight $W$ computes effective resistance estimates $\hat{R}_e$ satisfying $\hat{R}_e \geq R_e$ and $\sum_e w_e \hat{R}_e \leq O(n \log^2 n)$. The algorithm requires $O(\log^2 n \log^* n)$ depth, $O(m \log n \log nW)$ work and is correct with high probability.
\end{lemma}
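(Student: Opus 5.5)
Run the algorithm of \Cref{lma:rc_estimates} with $\kappa' := \log n$ to obtain estimates $\hat{q}(e)$ with $q_{4 \kappa' \alpha_{LDD}}(e) \leq \hat{q}(e) \leq 8\, q_{\kappa'}(e)$. Set $K := 4\kappa'\alpha_{LDD} = O(\log^2 n)$, since $\alpha_{LDD} = O(\log n)$ by \Cref{thm:ldd}, and output
$$
\hat{R}_e := 2K \cdot \hat{q}(e).
$$
By \Cref{lma:rc_estimates}, the depth is $O(\kappa' \log n \log^* n) = O(\log^2 n \log^* n)$, the work is $O(m \log n \log nW)$, and the output is correct with high probability. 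Forming $\hat{R}_e$ from $\hat{q}(e)$ costs only $O(m)$ work and $O(1)$ depth.

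For the lower bound, apply \Cref{lem:robust_connectivities_bound} with parameter $K$. This gives $R_e \leq 2K\, q_K(e)$, and since $q_K(e) \leq \hat{q}(e)$ we get $R_e \leq \hat{R}_e$.

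For the sum, the goal is to avoid losing the factor $\alpha_{LDD}$ inside the exponent of \Cref{lem:robust_connectivities_bound2}, which is why we use the upper bound in terms of the \emph{smaller} parameter $\kappa'$:
$$
\sum_e w_e \hat{R}_e = 2K \sum_e w_e \hat{q}(e) \leq 16 K \sum_e w_e\, q_{\kappa'}(e) \leq 32 K\, n^{1 + O(1/\log n)}.
$$
Since $n^{O(1/\log n)} = O(1)$, this is $O(K n) = O(n \log^2 n)$.

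The only delicate point is choosing $\kappa'$. It must be $\Omega(\log n)$ so that $n^{O(1/\kappa')} = O(1)$ in \Cref{lem:robust_connectivities_bound2}. At the same time it must be $O(\log n)$ so that the depth $O(\kappa' \log n \log^* n)$ and the multiplier $K = O(\kappa' \log n)$ stay within the claimed bounds. Taking $\kappa' = \Theta(\log n)$ meets both requirements. We also need the two robust-connectivity parameters to play the right roles. The lower bound on $R_e$ uses the larger parameter $K$, and it holds because robust connectivity estimates at the larger radius still dominate $R_e/(2K)$. The sum bound uses the smaller parameter $\kappa'$, and this is exactly the sandwich that \Cref{lma:rc_estimates} provides.
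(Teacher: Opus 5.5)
Your proposal is correct and is essentially the paper's own proof. The paper makes the same choices: it sets $\hat{R}_e = 8\kappa\alpha_{LDD}\,\hat{q}(e)$ (your $2K\hat{q}(e)$), uses \Cref{lem:robust_connectivities_bound} at radius $4\kappa\alpha_{LDD}$ for $\hat{R}_e \geq R_e$, and uses $\hat{q}(e) \leq 8q_\kappa(e)$ with \Cref{lem:robust_connectivities_bound2} at radius $\kappa$ for the sum, taking $\kappa = \Theta(\log n)$ so that $n^{O(1/\kappa)} = O(1)$ and the depth of \Cref{lma:rc_estimates} is $O(\log^2 n \log^* n)$.
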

\begin{proof}
     We compute the estimates $\hat{q}_e$ from \Cref{lma:rc_estimates}. Remember that by \Cref{lem:robust_connectivities_bound}, $R_e \leq 8 \kappa \alpha_{LDD} q_{4 \kappa \alpha_{LDD}}(e)$. As $q_{4 \kappa \alpha_{LDD}}(e) \leq \hat{q}(e)$, by setting $\hat{R}_e = 8 \kappa \alpha_{LDD} \hat{q}(e)$, we have that $\hat{R}_e \geq R_e$. Moreover, by \Cref{lem:robust_connectivities_bound2}, it is the case that
     \begin{align*}
              \sum_{e \in E} w_e \hat{R}_e &= 8 \kappa \alpha_{LDD} \sum_{e \in E} w_e \hat{q}(e)\\
              &\leq 8 \kappa \alpha_{LDD} \sum_{e \in E} 8 w_e q_{\kappa}(e)\\
              &\leq O(\log n) \kappa \cdot 2 n^{1+O(1/\kappa)}.
     \end{align*}
     Setting $\kappa \in O(\log n)$ large enough shows the claim.
\end{proof}
We can now prove the main theorem. Pseudocode is given in \Cref{alg:spectral_sparsifier}. {Intuitively, \cref{alg:spectral_sparsifier} estimates robust connectivities by running \cref{alg:ldd_edge_filter} at geometrically decreasing sparsities, which then allows us to bound the robust connectivity estimates with a relatively small error.} The pseudo-code combines the algorithmic steps from \Cref{lma:rc_estimates}, \Cref{thm:er_estimates}, and \Cref{thm:main_algorithm}.
\begin{theorem}\label{thm:main_algorithm}
    There is a parallel algorithm that produces a $(1 \pm \epsilon)$ spectral sparsifier of a graph $G$ with $O(n \epsilon^{-2} \log^3 n)$ many edges in work $O(m \log n \log nW)$ and depth $O(\log^2 n \log^* n)$ and is correct with high probability.
\end{theorem}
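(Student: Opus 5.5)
The plan is to combine \Cref{thm:er_estimates} with the effective-resistance sampling theorem \Cref{thm:er_sampling}. First, run the algorithm of \Cref{thm:er_estimates} with $\kappa \in O(\log n)$ chosen large enough that $n^{O(1/\kappa)} = O(1)$. This gives, with high probability, estimates $\hat{R}_e \geq R_e$ for every edge, together with $\sum_e w_e \hat{R}_e \leq O(n \log^2 n)$. This step costs $O(m \log n \log nW)$ work and $O(\log^2 n \log^* n)$ depth, since $\kappa = O(\log n)$ is plugged into the $O(\kappa \log n \log^* n)$ depth bound of \Cref{lma:rc_estimates}.

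Next, in parallel over all edges, set
\[
p_e := \min\setof{1,\; c\, w_e \hat{R}_e \log n / \epsilon^2}
\]
with the constant $c$ from \Cref{thm:er_sampling}. Sample each edge independently with probability $p_e$ and rescale each kept edge's weight by $1/p_e$. Because $\hat{R}_e \geq R_e$, we have $p_e \geq c\, w_e R_e \log n/\epsilon^2$ whenever $p_e<1$. When $p_e = 1$ the edge is kept deterministically, which only helps; the theorem's hypothesis is just a lower bound on $p_e$, and $p_e \le 1$ is automatic. Hence \Cref{thm:er_sampling} yields a $(1\pm\epsilon)$-spectral sparsifier with high probability. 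For $\epsilon \le 1/\sqrt n$ the bound $O(n\epsilon^{-2}\log^3 n) \ge m$ holds trivially, so we may return $G$ itself in that case. This sampling step costs $O(m)$ work and $O(1)$ depth in CRCW PRAM.

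For the size bound, the expected number of sampled edges is
\[
\sum_e p_e \le \frac{c\log n}{\epsilon^2}\sum_e w_e\hat{R}_e = O(n\epsilon^{-2}\log^3 n).
\]
The count is a sum of independent Bernoullis, so the upper-tail part of \Cref{thm:chernoff}, applied with an upper bound $c' \ge \mu$ on the mean, shows that it is at most twice this bound with high probability. This holds because the bound is $\Omega(\log n)$; if needed, we take $c'$ to be the bound itself. A final union bound over the three high-probability events (the estimates, the spectral guarantee, and the size bound) gives the theorem.

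The only delicate point is bookkeeping rather than a real obstacle. We must check that $\kappa=O(\log n)$ makes $n^{O(1/\kappa)}$ constant, so that the $\log^2 n$ in \Cref{thm:er_estimates} (from $\kappa\alpha_{LDD}$) times the $\log n$ in the sampling rate gives exactly $\log^3 n$. We must also check that the depth of \Cref{lma:rc_estimates} becomes $O(\log^2 n\log^* n)$, because the $O(\log nW)$ binary-search instances run in parallel rather than sequentially.
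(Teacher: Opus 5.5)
Your proposal is correct and takes the same route as the paper's proof. Both take the estimates $\hat{R}_e \ge R_e$ with $\sum_e w_e \hat{R}_e = O(n\log^2 n)$ from \Cref{thm:er_estimates}, sample each edge with probability proportional to $c\, w_e \hat{R}_e \log n/\epsilon^2$ via \Cref{thm:er_sampling}, bound the expected edge count by $O(n\epsilon^{-2}\log^3 n)$, and apply a Chernoff bound. Your extra bookkeeping (capping $p_e$ at $1$, handling $\epsilon \le 1/\sqrt{n}$, the explicit upper-tail Chernoff bound and the final union bound) only spells out details the paper leaves implicit.
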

\begin{proof}
    We sample each edge with probability $p_e = c w_e \hat{R}_e \log n / \epsilon^2$ and {rescale} its weight by $1/p_e$ if sampled, where the constant $c$ is from \Cref{thm:er_sampling}. By the theorem, $H$ is a $(1 \pm \epsilon)$ spectral sparsifier of $G$ with high probability. Furthermore, the expected number of edges is given by $\sum_e p_e = O(\epsilon^{-2}\log n) \sum w_e \hat{R}_e = O(n \epsilon^{-2}\log^3 n)$ by \Cref{thm:er_estimates}. A standard Chernoff bound may now be applied to turn this into a high probability guarantee. 
\end{proof}

\begin{algorithm}[H]
\caption{Parallel Spectral Sparsification}
\label{alg:spectral_sparsifier}
\begin{algorithmic}[1]

\REQUIRE Graph $G = (V, E, w)$; accuracy parameter $\epsilon > 0$  

\STATE Let $L \gets \lceil \log(8 m W) \rceil$ // As $\frac{1}{4 m W}$ lower bounds all $q_{\kappa}(e)$.
\STATE Let $\kappa \in O(\log n)$ be large enough; $\alpha_{LDD}$ be the constant from \Cref{thm:ldd}.

\FOR{$i = 0$ to $L$ \textbf{in parallel}}
    \STATE Run \Cref{alg:ldd_edge_filter} with $\pi_i = 2^{-i}$
    \STATE Let $\widetilde{E}^{(i)}$ be the edges returned
\ENDFOR

\FOR{each edge $e \in E$ \textbf{in parallel}} 
    \STATE $j_e \gets \max \{\, i \mid e \in \widetilde{E}^{(i)} \,\}$
    \STATE $\hat{q}(e) \gets 2^{-j_e}$
    \STATE $\hat{R}_e \gets 8 \kappa \alpha_{\mathrm{LDD}} \cdot \hat{q}(e)$
\ENDFOR

\STATE $H \gets (V, \emptyset)$
\FOR{each edge $e \in E$ \textbf{in parallel}}
    \STATE $p_e \gets \min \left(1, c \cdot w_e \hat{R}_e \frac{\log n}{\epsilon^2}\right)$ \hspace{3em}// Here $c$ is the constant from \Cref{thm:er_sampling}.
    \STATE Sample edge $e$ with probability $p_e$
    \IF{$e$ is sampled}
        \STATE Add $e$ to $H$ with weight $w_H(e) \gets {w_e / p_e}$
    \ENDIF
\ENDFOR

\STATE \textbf{return} $H$

\end{algorithmic}
\end{algorithm}

\section{Experimental}
We validate our proposed method to compute leverage score estimates on a range of graphs. 

\subsection{Problem setting}
Spectrally sparsifying graphs is a key subroutine when solving Laplacian linear systems. The sparsified version of a graph can be used as a preconditioner in iterative methods such as conjugate gradient. If the sparsified graph has significantly fewer nonzero entries than the original graph, while being spectrally similar, then it drastically reduces the time to compute and apply the preconditioner. Furthermore, in \cite{BK24} it was shown that repeated sparsification of the remaining graph during approximate Cholesky leads to a polylogarithmic depth for computing the preconditioner. In this spirit, we propose the following experiment on two types of graphs.\\

We set up a multidimensional grid (2d-grid and 3d-grid) and run Cholesky on the underlying matrix until it is dense. Then, we sparsify the dense subgraph and see how well the sparsified graph spectrally approximates the dense subgraph.\\

\subsection{Algorithms and Baselines}
In \cref{thm:main_algorithm}, we propose an algorithm that spectrally sparsifies a graph through repeated LDD computations. While our theoretical bound of $O\left(\frac{n\log^3n}{\epsilon^2}\right)$ is very sparse in theory, it is often not sparse enough {in practice}, since a large constant as well as the exponent of the logarithm may well exceed the desired density. This observation holds for many simple sparsification methods, including the simple parallel sparsifier by Koutis \cite{koutis2014} or even just using the oversampling lemma together with leverage score estimates \cite{doi:er_sampling}. In fact, it is the reason why it is often said that sampling by leverage scores does not produce effective sparsifiers in practice. We instead propose a practical sparsifier motivated by the sparsifier of Koutis \cite{koutis2014} and show how to leverage our robust connectivity estimates to improve spectral similarity. It should be seen as a heuristic to compute good spectral sparsifiers based on our robust connectivity estimates.\\

Our algorithm takes as input a weighted graph $G$ and a target sparsity budget of $f\cdot m$ edges. It first computes, for every edge, an estimate $\hat{R}_e$ of its effective resistance (robust connectivities), and sets a sampling score $s_e \gets w_e \hat{R}_e$. Next, it constructs a maximum weight spanning tree $T$ where we replace the edge-weights by the score $s_e$. Finally, it samples each non-tree edge $e \in E\setminus T$ independently with probability
\[
p_e=\min\!\left(1,\frac{(f\,m - n+1)\,s_e}{\sum_{e'\in E\setminus T} s_{e'}}\right),
\]
and, if selected, sets the weight to $w_e/p_e$ to keep the expected contribution unbiased. The resulting sparsifier is $H=(V,\,T\cup \widehat{E})$. See \cref{alg:practical_sparsification} for pseudocode.

\begin{algorithm}[H]
\caption{Leverage-Based Graph Sparsification, \texttt{One-Shot: max-Tree(Leverage) + Leverage}}
\label{alg:practical_sparsification}
\begin{algorithmic}[1]
\REQUIRE Graph $G=(V,E,w)$ with $|V|=n$, $|E|=m$; sparsity parameter $f$
\ENSURE Sparsified graph $H$

\STATE Compute effective resistance estimates (robust connectivities) $\hat{R}_e$ for all $e \in E$ \hspace{3em}// {Lines 1-11 from \cref{alg:spectral_sparsifier}, $\epsilon$ is only used after line 11 and does not have to be passed.}
\STATE Set scores $s_e \gets w_e \hat{R}_e$

\STATE $T \gets$ maximum spanning tree of $G$ using scores $s_e$
\STATE $E^\prime \gets E \setminus T$
\STATE $Z \gets \sum_{e \in E^\prime} s_e$

\FOR{each $e \in E^\prime$}
  \STATE $p_e \gets \min\!\left(1,\; \frac{(f\,m - n + 1)\,s_e}{Z}\right)$
  \STATE Sample $e$ with probability $p_e$. If sampled set $w_e \gets w_e/p_e$
\ENDFOR

\STATE \textbf{return} $H=\left(V,\; T \cup \widehat{E}\right)$
\end{algorithmic}
\end{algorithm}

{
\begin{remark}
    Given a graph $G=(V,E,w)$, let $\hat{R}_e$ be the effective resistance estimates as in \cref{thm:er_estimates}. If $\frac{fm - n + 1}{\sum_{e \in E'} w_e\hat{R}_e} > c_1\cdot \log n$ for an appropriate constant $c_1$, then \cref{alg:practical_sparsification} produces a $1/2$-approximate spectral sparsifier of $G$.
\end{remark}
\begin{proof}
    From \cref{thm:er_sampling} we know that if we oversample the edges of the graph with respect to their leverage scores $w_e R_e$ by a factor of at least $c \log n$, then we produce a valid spectral sparsifier. First, any edge that is in the sparsifier because it was in the maximum weight spanning tree (i.e. $T$ in \cref{alg:practical_sparsification}) is clearly oversampled. Second, we get robust connectivity estimates $\hat{R}_e$ from \cref{thm:er_estimates} (i.e. line 1 in \cref{alg:practical_sparsification}). If $\frac{fm - n + 1}{\sum_{e \in E'} w_e\hat{R}_e} > c_1\cdot \log n$ for an appropriate constant $c_1$, then, since $\hat{R}_e\geq R_e$, we oversample the actual leverage scores $R_e$ by $c_1\cdot \log n$. If we set $c_1 = c$, then we sample each edge with a probability larger than $c\log n \cdot w_eR_e$ and by invoking \cref{thm:er_sampling}, we achieve the desired guarantee.
\end{proof}
}

The work and depth of \cref{alg:practical_sparsification} are dominated by the computation of effective resistance estimates and consequently given by \cref{thm:main_algorithm}. \Cref{alg:practical_sparsification} will oftentimes use fewer edges than allowed by the budget. Let $E_{final}$ denote the number of edges in the final sparsifier and $T$ the computed maximum spanning tree. Then, we get 
\begin{align*}
        \E{X_{\text{final}}} &= n - 1 + \E{\sum_{e\in E\setminus T} p_e}\\
        &\leq n - 1 + \E{\sum_{e\in E\setminus T} \frac{(f\,m - n+1)\,s_e}{\sum_{e'\in E\setminus T} s_{e'}}}\\
        &= n-1 + fm - n + 1 = fm.
\end{align*}
This is mostly true for $f$ relatively large, in which case most sparsifiers work well.

\paragraph{A note on \cref{alg:practical_sparsification}.}
\Cref{alg:practical_sparsification} is a budgeted, one-shot variant of leverage-score sampling. Rather than invoking the oversampling lemma with an $\epsilon^{-2}\log n$ factor, we normalise the sampling probabilities to target a specific edge budget $f\cdot m$. We include a spanning-tree backbone $T$, chosen as the maximum spanning tree with respect to the same scores $s_e=w_e\hat{R}_e$. This guarantees connectivity of the output sparsifier. The overall design is in the spirit of backbone-based sparsification schemes (e.g. \cite{koutis2014,koutis2013fasterspectralsparsificationnumerical}), which retain a structured subgraph and then sample the remaining edges to meet a target density. Variants of this template could replace $T$ by other sparse backbones (such as spanners or low-stretch trees). In this work, we focus on the maximum-score tree since it is simple to compute and is known to be a good low-stretch tree in practice.

\paragraph{Baselines.}
We compare against the following baselines.
\begin{enumerate}[label=(\roman*)]
    \item \textbf{\texttt{Uniform Sampling}.} Independently sample each edge with a fixed probability chosen so that the expected number of retained edges matches $f\cdot m$, and reweight sampled edges by $w_e/p$.
    \item \textbf{\texttt{Leverage-Score Sampling}.} Independently sample each edge proportional to $s_e=w_e\hat{R}_e$, with probabilities normalized to match the expected budget $f\cdot m$, but without adding a backbone tree.
    \item \textbf{\texttt{One-Shot: max-Tree(weight) + Uniform}.} We run the same pipeline as \cref{alg:practical_sparsification}, but construct the backbone tree using the original weights $w_e$ (maximum spanning tree by $w_e$) and sample non-tree edges uniformly at random (with the uniform rate chosen to match the same expected edge budget).
    \item \textbf{Iterative variants.} We also consider an iterative version in which we apply four rounds of sparsification. We budget each round so that the final expected density matches the desired budget. We report both \texttt{Iterative: max-Tree(Leverage) + Leverage} and \texttt{Iterative: max-Tree(weight) + Uniform}.
\end{enumerate}
We see the algorithm \texttt{Iterative: max-Tree(weight) + Uniform} as a simplified version of the algorithm described by Koutis \cite{koutis2014}. Sampling a polylogarithmic number of spanners instead of a low-stretch tree would lead to additional density, which is why we opt for the max-weight tree. The rest of the algorithm is equivalent.

\subsection{Experiments}\label{sec:experiment_description}
The arguably most important application of spectral sparsifiers is as preconditioners to find solutions of partial differential equations. To that end, we construct two- and three-dimensional grid graphs with $n$ nodes, and assign edge weights in a checkerboard pattern with patchsizes of $4\times4$ and $4\times 4\times 4$ respectively. These graphs typically appear in diffusion problems with a jumping diffusion coefficient and are known to be among the more difficult diffusion problems \cite{gao2023robustpracticalsolutionlaplacian,amsel2026linearsystemseigenvalueproblems}. We set the smallest weight to be $1$, and the largest is $100{,}000$. Next, we eliminate a fraction $k$ of randomly chosen nodes using standard Cholesky elimination. This produces a smaller graph with $(1-k)n$ vertices that is much denser due to fill-in. In our experiments, we set $k=0.5$ for the 2d grids and $k=0.3$ for the 3d grids. We denote this dense graph by $G$. We then compute a sparsified version $\tilde{G}$ with $s \cdot m$ edges as the edge budget. We set $s =0.25$.

We measure two indicators for how well the sparsified graph approximates the original graph. When feasible, we report the condition number $\frac{\lambda(\tilde{G}^{-1}G)_{\text{max}} }{\lambda(\tilde{G}^{-1}G)_{\text{min}}} $ directly, which can be used to bound the convergence of iterative methods such as the Conjugate Gradient method. Since computing the condition number exactly takes $O(n^3)$ time, it is only feasible for relatively small graphs. We compute it for graphs where $n\leq 2000$. To measure how well our sparsifier approximates large graphs, we build an IC(0) preconditioner from $\tilde{G}$ and record how many PCG iterations are required to solve a linear system on $G$.

{Because all preconditioners are built to the same edge budget, their per-iteration solve cost is essentially identical, so the PCG iteration count serves as a hardware-independent proxy for total solve time. We confirm this and report end-to-end wall-clock timings in \cref{app:timing}.}

When measuring the PCG iterations, we run two types of experiments. First, we vary the number of nodes $n$, while keeping all other parameters fixed. This allows us to study the convergence asymptotics of the different preconditioners across different graph sizes. Second, we vary the desired sparsity of the preconditioner while keeping all other variables fixed. While it is clear that a sparser preconditioner has to decrease in performance, it allows us to study how well each preconditioning method utilises the edge budget it is given.

In \cref{app:other_cond_number_experiments}, we also compute the condition numbers for different $k$.

\subsection{Technical Specification}
All experiments were implemented in C++ using the Graph Based Benchmark Suite (GBBS)~\cite{dhulipala2018theoretically, dhulipala20gbbs} and executed in parallel on an ASUS ROG Strix laptop running Ubuntu 24.04.3 LTS. Code was compiled with \texttt{g++} 13.3.0.

\subsection{Results}
In this section, we present the practical results and benchmarks.

\subsubsection{Condition Numbers for Small Graphs}
In \cref{tab:cond_2d,tab:cond_3d}, we present the exact condition number of $\tilde{G}^{-1}G$. For the largest instance of the 2d grid, we see that our leverage score-based algorithm is approximately $5\times$ smaller than the best alternative baseline. Furthermore, we see that the condition number does not degenerate when increasing the size of the graph. Interestingly, we see that the iterative method of our proposed algorithm seems to be more stable for small graphs. The methods \texttt{Uniform Sampling} and \texttt{Leverage-Score Sampling} have an unbounded condition number even for small problem instances. This may happen when either sampling the graph disconnects it, in which case the smallest and second smallest eigenvalues are both zero or when the eigenvalues become so small that there appear numerical instabilities when computing them.

For the 3d grid, we see a similar behaviour, where our proposed method outperforms the baselines by about a factor of $4\times$, with the iterative version being slightly better than the one-shot version.
\begin{table}[t]
\centering
\caption{Condition numbers for the 2D dataset as a function of method and $n$. Entries are the median condition number over successful runs. Column abbreviations: Unif.\ = \texttt{Uniform Sampling}, Lev.\ = \texttt{Leverage-Score Sampling}, Tr-w+U = \texttt{One-Shot: max-Tree(weight) + Uniform}, Tr-lev+L = \texttt{One-Shot: max-Tree(Leverage) + Leverage}, It-U = \texttt{Iterative max-Tree(weight) + Uniform}, It-L = \texttt{Iterative max-Tree(Leverage) + Leverage}. $^{*}$ denotes number of failed runs; an em dash (—) denotes that all runs failed.}
\label{tab:cond_2d}
\begin{tabular}{rllllll}
\hline
$n$ & Unif. & Lev. & Tr-w+U & Tr-lev+L & It-U & It-L \\
\hline
50   & — & 16.3$^{**}$ & 51.4 & 23.8 & 65.4  & \textbf{16.3} \\
200  & — & $2.5\times 10^{15}$$^{*}$ & 48.7 & 18.9 & 71.2  & \textbf{13.9} \\
450  & — & 15.0$^{**}$ & 56.9 & 75.5 & 76.3  & \textbf{11.1} \\
800  & — & 17.8 & 60.4 & \textbf{6.8} & 77.7  & 8.4 \\
1250 & — & 20.6$^{*}$ & 61.0 & \textbf{11.5} & 126.9 & 11.8 \\
\hline
\end{tabular}
\end{table}

\begin{table}[t]
\centering
\caption{Condition numbers for the 3D dataset as a function of method and $n$. Entries are the median condition number over successful runs. Column abbreviations: Unif.\ = \texttt{Uniform Sampling}, Lev.\ = \texttt{Leverage-Score Sampling}, Tr-w+U = \texttt{One-Shot: max-Tree(weight) + Uniform}, Tr-lev+L = \texttt{One-Shot: max-Tree(Leverage) + Leverage}, It-U = \texttt{Iterative max-Tree(weight) + Uniform}, It-L = \texttt{Iterative max-Tree(Leverage) + Leverage}. $^{*}$ denotes number of failed runs; an em dash (—) denotes that all runs failed.}
\label{tab:cond_3d}
\begin{tabular}{rllllll}
\hline
$n$ & Unif. & Lev. & Tr-w+U & Tr-lev+L & It-U & It-L \\
\hline
511 & — & 30.32 & 36.35 & 13.66 & 83.05 & \textbf{10.96} \\
700 & — & 15.38$^{*}$ & 45.05 & \textbf{5.14} & 66.33 & 6.76 \\
932 & — & 59.11 & 52.95 & 17.07 & 61.00 & \textbf{10.72} \\
1210 & — & 22.21 & 51.46 & 8.12 & 50.46 & \textbf{5.43} \\
1538 & — & 25.64 & 46.64 & 10.27 & 86.09 & \textbf{8.48} \\
\hline
\end{tabular}
\end{table}

\subsubsection{Results for Large Graphs}
\Cref{fig:scaling_by_n} reports the number of PCG iterations as a function of the number of edges removed from the initial graph. The leverage score–based sparsifiers consistently outperform the weight-based baselines by a substantial margin. Moreover, the proposed sparsifier scales better with respect to the number of nodes than the weight-based alternatives. These results indicate that leverage score–based sparsification becomes increasingly relevant for larger graphs, which is exactly the regime where leverage score computation is also more demanding.

\begin{figure}[htbp]
    \centering
    \includegraphics[width=0.5\textwidth]{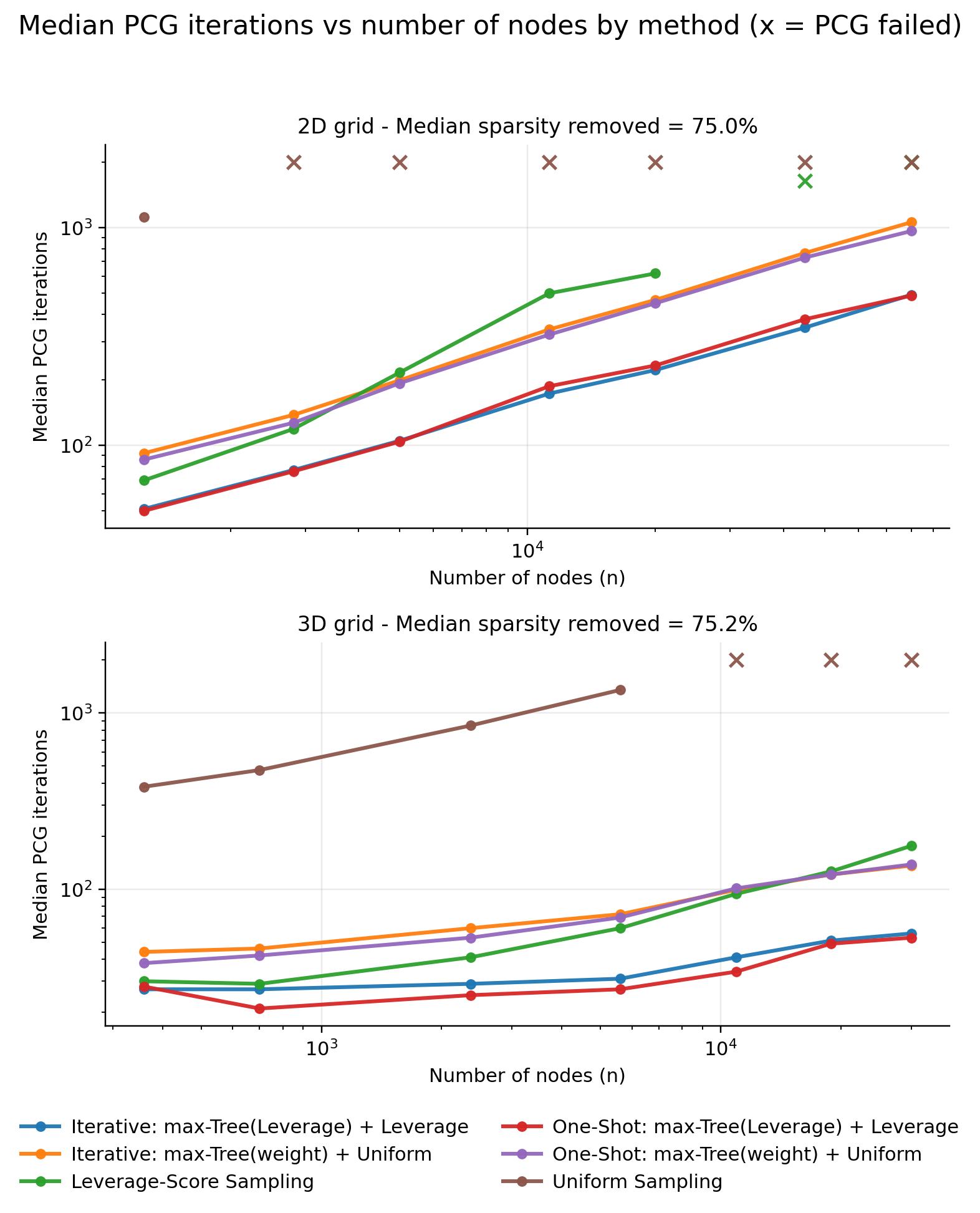}
    \caption{The number of PCG iterations as a function of the size of the graph. We first sparsify the original graph and then use IC(0) as a preconditioner. The budget for the edges is $25\%$ of the initial, unsparsified number of edges, and we report the percentage that was actually removed. The result is the median number of PCG iterations across 3 runs. Failed runs (more than 2000 PCG iterations) are marked with a $\times$ in the corresponding color.}
    \label{fig:scaling_by_n}
    \Description{Median PCG iterations versus number of nodes for each sparsification method.}
\end{figure}

\Cref{fig:pcg_by_sparsity} reports the number of PCG iterations as a function of the number of edges removed from the original graph $G$. For each graph family, we evaluate the algorithm on two graph sizes. Across all experiments, we observe qualitatively similar behaviour: the leverage-score-based sparsifier consistently outperforms the baselines by a substantial margin. At moderate sparsification levels, the leverage-score-based method also exhibits better scaling as sparsity increases. However, its performance worsens once the sparsified graph is required to be extremely sparse ($10\%$ of initial edges). A plausible explanation is that, in this regime, the variance introduced by independent sampling becomes more pronounced.

\begin{figure}[htbp]
    \centering
    \includegraphics[width=0.5\textwidth]{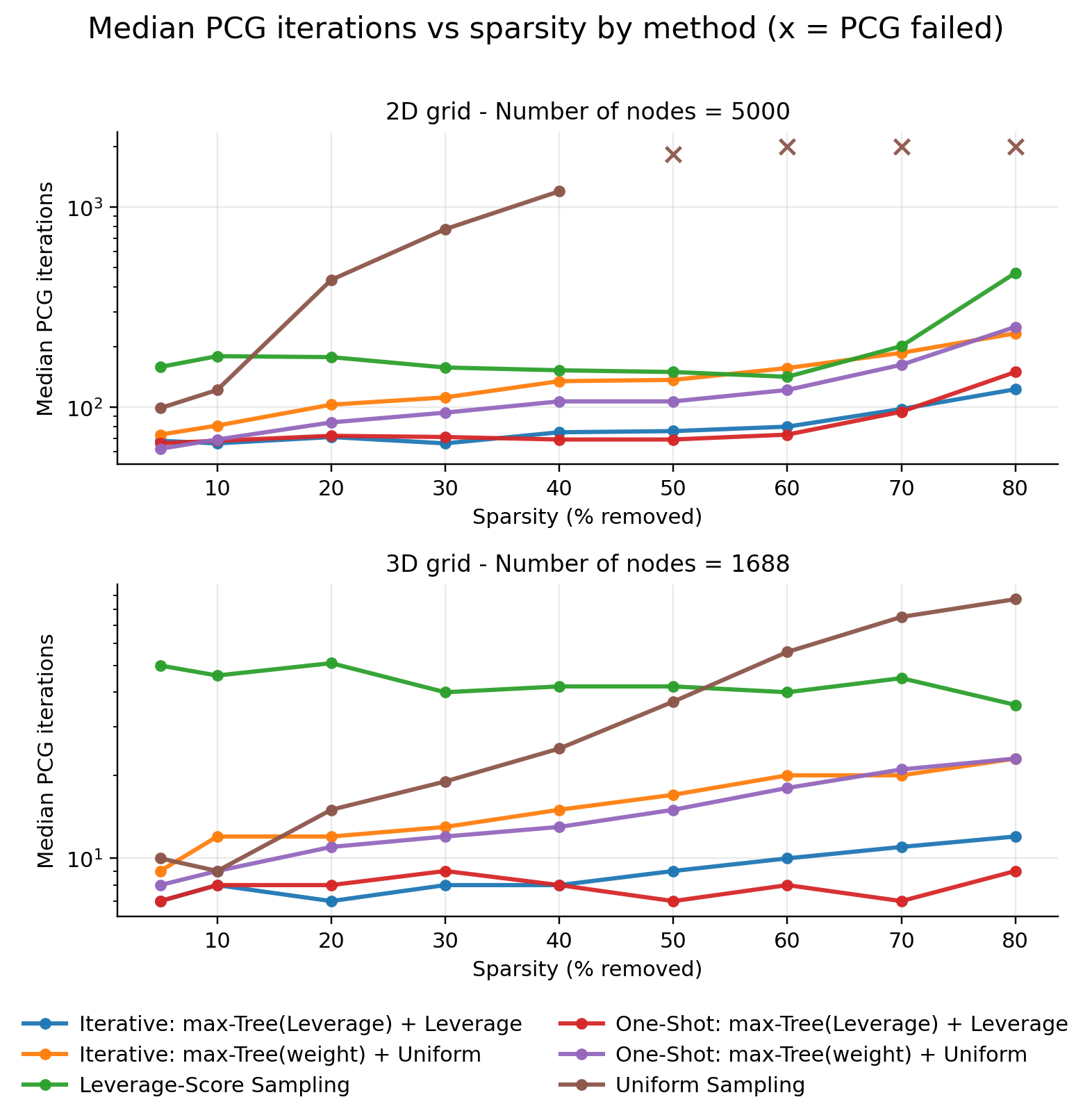}
    \caption{The number of PCG iterations as a function of the sparsity of the graph. We first sparsify the original graph and then use IC(0) as a preconditioner. We report the median number of PCG iterations out of three runs. Failed runs (more than 2000 PCG iterations) are marked with a $\times$ in the corresponding color.}
    \label{fig:pcg_by_sparsity}
    \Description{Median PCG iterations versus number of nodes for each sparsification method.}
\end{figure}

\paragraph{Remark.} Iterative sparsification with the proposed method behaves very similarly to one-shot sparsification. Because the tree is recomputed from the leverage scores, the trees selected in consecutive rounds can be nearly identical and effective resistances are preserved throughout the iterative sparsification.

\section{Improving parallel approximate Cholesky}
In \cite{kyng2016approximategaussianeliminationlaplacians,gao2023robustpracticalsolutionlaplacian} the authors present algorithms that take a Laplacian matrix $L$ and produce a factorization $\mathcal{L}\mathcal{L}^\top$ that spectrally approximates $L$ and which can be used to solve linear systems fast. They essentially show that when running the standard Cholesky algorithm on the Laplacian matrix $L$, it suffices to insert a sparsified Schur complement at each step to achieve a good factorization. The algorithms themselves differ only in how the sparsified Schur complement is computed. To be more formal, let $L^{(0)}$ be the initial input Laplacian, where the vertices are randomly permuted. Then, the Cholesky algorithm, at step $i$, computes the factor $\mathcal{L}$ as 
\begin{equation*}
    \mathcal{L}[:, i] = \frac{L^{(i)}[:,i]}{\sqrt{L^{(i)}[i,i]}} \quad \text{ and } \quad L^{(i+1)} = L^{(i)} - \frac{L^{(i)}[:,i]\cdot L^{(i)}[i,:]}{L^{(i)}[i,i]}
\end{equation*}
where $L[:,i]$ and $L[i,:]$ are the $i$-th row and column of $L$. It can be shown that not only $L^{(i)}$ is a Laplacian for any $i$, but also that the correction term can be decomposed into two Laplacians as follows 
\begin{equation*}
    \frac{L^{(i)}[:,i]\cdot L^{(i)}[i,:]}{L^{(i)}[i,i]} = \textsc{STAR($L^{(i)},v_i$)} - \textsc{CLIQUE($L^{(i)},v_i$)}.
\end{equation*}
The \textsc{CLIQUE($L^{(i)},v_i$)} Laplacian is essentially a clique on the neighbors of the vertex that is eliminated at stage $i$. The algorithms of \cite{kyng2016approximategaussianeliminationlaplacians,gao2023robustpracticalsolutionlaplacian} now show that it suffices to replace the \textsc{CLIQUE($L^{(i)},v_i$)} Laplacian by a sparsified version of it.

In \cite{BK24}, it was then shown that these algorithms can be parallelized, with polylogarithmic depth, through a framework that, in addition to sparsifying each individual Schur complement, sparsifies the remaining graph $L^{(i)}$ at only $O(\log n)$ specific steps of the elimination process. While any black-box sparsification algorithm suffices to achieve polylogarithmic depth, the authors point out that the overall algorithm's work and depth are bottlenecked by the work and depth of the sparsification algorithm. In fact, to achieve a factorization that is $\epsilon$-approximate, the algorithm needs to compute an $\frac{\epsilon}{\log n}$-approximate spectral sparsifier $O(\log n)$ times. When the sparsification algorithm proposed by \cite{koutis2014} is used, then the depth and work of the sparsification algorithm scale with $\tilde{\epsilon}^{-2}$, where $\tilde{\epsilon}$ is the desired accuracy of the spectral sparsifier, and thus the total depth and work of the approximate Cholesky algorithm has an additional $\log^2n$ factor in the depth and work. Our proposed algorithm in \cref{thm:main_algorithm} has, in addition to lower work and depth overall, no dependence on $\epsilon$, and, when used as a black box, immediately improves the depth and work of the parallelization framework. In particular, it shifts the depth bottleneck from the computational depth of the sparsification routine to the sparsity of the resulting sparsifier.

\section{Conclusion}
We have presented a solver-free, parallel spectral sparsification algorithm that significantly reduces both work and depth compared to prior methods. Extremely simple in design, our approach relies only on sub-sampling and the computation of low-diameter decompositions. To demonstrate its practical impact, we implemented a heuristic based on the algorithm and observed strong performance across a set of graphs and sparsity regimes.

\section{Future Work}
We believe that the LDD-based spectral sparsification algorithm we presented could potentially also be turned into a dynamic algorithm by using dynamic versions of low-diameter decompositions, which exist \cite{10.1145/3313276.3316381}. Consequently, it would be interesting to see if one can extend the algorithm to work in a parallel batch-dynamic setting.

We also believe that the key application of graph sparsification algorithms is for preconditioning Laplacian linear systems. In practice, Laplacian system solvers are highly optimized programs that run in time $O(m\log n)$ on huge graphs. As such, using spectral sparsification as a subroutine for these solvers would imply that the solvers need not only be theoretically fast, but also scale well in practice. For undirected Laplacian linear systems, we now have very simple spectral sparsification algorithms. However, more recent work has extended Laplacian solvers to directed, Eulerian graphs. It would be interesting to see to what extent sparsifiers could be used to speed up directed Laplacian solvers and whether simple enough sparsifiers for directed Laplacian linear systems can exist to speed up solving linear systems in practice.

\newpage
\bibliography{refs}
\appendix
\section{Appendix}\label{app:other_cond_number_experiments}
\subsection{Condition numbers for slightly different graphs}
In this section, we compare how the computed condition numbers behave if we vary the initial density of the sparsified graph.
\subsubsection{Denser Graphs}
\Cref{tab:cond_2d_different_density,tab:cond_3d_different_density} show the condition numbers for the same experiment as run in \cref{sec:experiment_description} but with $k=0.6$ (vs. $k=0.5$ before) for the 2d grids and $k=0.3$ (vs. $k=0.4$ before). This leads to an overall denser graph. 

We observe that the proposed sparsifier outperforms the baselines by a wide margin. Increasing the density of the original graph also increases the edge budget $mf$ relative to the number of nodes in the graph $n$, since we increase $m$ while keeping $n$ fixed. From \cref{fig:pcg_by_sparsity} we already know that our proposed sparsifier better uses the budget available. The same can be observed in this instance.

\begin{table}[t]
\centering
\caption{Condition numbers for the 2D dataset as a function of method and $n$. Entries are the median condition number over successful runs. Column abbreviations: Unif.\ = \texttt{Uniform Sampling}, Lev.\ = \texttt{Leverage-Score Sampling}, Tr-w+U = \texttt{One-Shot: max-Tree(weight) + Uniform}, Tr-lev+L = \texttt{One-Shot: max-Tree(Leverage) + Leverage}, It-U = \texttt{Iterative max-Tree(weight) + Uniform}, It-L = \texttt{Iterative max-Tree(Leverage) + Leverage}. $^{*}$ denotes number of failed runs; an em dash (—) denotes that all runs failed.}
\label{tab:cond_2d_different_density}
\begin{tabular}{rllllll}
\hline
$n$ & Unif. & Lev. & Tr-w+U & Tr-lev+L & It-U & It-L \\
\hline
40   & — & $5.5\times 10^{16}$$^{*}$ & 21.7 & 24.3 & 39.4 & \textbf{12.5} \\
160  & — & 9.1  & 15.2 & \textbf{3.5} & 21.5 & 6.8 \\
360  & — & 9.7  & 25.9 & \textbf{3.3} & 42.1 & 6.2 \\
640  & — & 19.8 & 26.2 & \textbf{2.6} & 33.6 & 5.0 \\
1000 & — & 45.7 & 30.6 & \textbf{2.6} & 47.2 & 5.2 \\
\hline
\end{tabular}
\end{table}

\begin{table}[t]
\centering
\caption{Condition numbers for the 3D dataset as a function of method and $n$. Entries are the median condition number over successful runs. Column abbreviations: Unif.\ = \texttt{Uniform Sampling}, Lev.\ = \texttt{Leverage-Score Sampling}, Tr-w+U = \texttt{One-Shot: max-Tree(weight) + Uniform}, Tr-lev+L = \texttt{One-Shot: max-Tree(Leverage) + Leverage}, It-U = \texttt{Iterative max-Tree(weight) + Uniform}, It-L = \texttt{Iterative max-Tree(Leverage) + Leverage}. $^{*}$ denotes number of failed runs; an em dash (—) denotes that all runs failed.}
\label{tab:cond_3d_different_density}
\begin{tabular}{rllllll}
\hline
$n$ & Unif. & Lev. & Tr-w+U & Tr-lev+L & It-U & It-L \\
\hline
438  & — & 14.3 & 20.7 & \textbf{2.3} & 37.0 & 5.1 \\
600  & — & 29.3 & 21.3 & \textbf{2.1} & 24.3 & 4.3 \\
799  & — & 22.1 & 21.7 & \textbf{2.4} & 32.6 & 3.8 \\
1037 & — & 51.4 & 21.6 & \textbf{2.2} & 21.3 & 4.0 \\
1319 & — & 54.5 & 37.5 & \textbf{2.2} & 42.9 & 4.1 \\
\hline
\end{tabular}
\end{table}

\subsection{End-to-end running times}
\label{app:timing}
\Cref{tab:timing} reports wall-clock timings on the 2D grid, measured on 24 threads. We focus on the two one-shot methods. The data support using PCG iteration count as the headline metric in \cref{sec:experiment_description}. First, once the edge budget is fixed the sparsifiers have essentially equal edge counts ($\approx 25\%$ of the original) and the per-iteration solve cost is identical across methods to within $1\%$, so iteration count is a hardware-independent proxy for solve time. Second, although estimating leverage scores makes the leverage-based construction more expensive than the weight-based baseline, the reduction in iterations more than compensates: the leverage-based method is faster end-to-end at both sizes (e.g.\ $268$\,s vs.\ $342$\,s at $n=720{,}000$). Finally, our construction uses only LDDs, MSTs, array sums, and edge subsampling, all of which admit scalable parallel implementations~\cite{dhulipala2018theoretically}.

\begin{table*}[t]
\centering
\small
\caption{Spectral sparsifier scaling results on the 2D grid, measured on 24 threads. Rows show the median over 3 runs. Only the one-shot methods are included; the abbreviations Tr-w+U and Tr-lev+L are as in Table~\ref{tab:cond_2d}. Here $m_{\text{orig}}$ and $m_{\text{spars}}$ are the edge counts before and after sparsification, $t_{\text{sparsify}}$ is the sparsifier construction time, $t_{\text{solve}}$ the total PCG solve time, and $t_{\text{solve}}/\text{iter}$ the per-iteration solve cost.}
\label{tab:timing}
\begin{tabular}{lrrrrrrrr}
\hline
Method & $n$ & $m_{\text{orig}}$ & $m_{\text{spars}}$ & $m_{\text{spars}}/m_{\text{orig}}$ & Iters & $t_{\text{sparsify}}$ (s) & $t_{\text{solve}}$ (s) & $t_{\text{solve}}/\text{iter}$ (ms) \\
\hline
Tr-w+U   & 720,000   & 20,976,346 & 5,243,755 & 25.0\% & 4,668 & 1.28  & 340.2 & 72.9  \\
Tr-lev+L & 720,000   & 20,976,346 & 5,305,784 & 25.3\% & 3,252 & 28.80 & 239.6 & 73.7  \\
Tr-w+U   & 1,125,000 & 33,024,333 & 8,255,914 & 25.0\% & 5,264 & 1.92  & 642.2 & 122.0 \\
Tr-lev+L & 1,125,000 & 33,024,333 & 8,348,455 & 25.3\% & 3,841 & 46.14 & 471.9 & 122.9 \\
\hline
\end{tabular}
\end{table*}

\end{document}